\newcommand\ket[1]{\ensuremath{|#1\rangle}}
\newcommand\bra[1]{\ensuremath{\langle#1|}}
\newcommand{\ketbra}[2]{| #1 \rangle\langle #2 |}
\newcommand\tr{\mathop{\rm tr}\nolimits}
\newcommand{\bes}{\begin{eqnarray*}}
	\newcommand{\ees}{\end{eqnarray*}} 
\newcommand{\bpm}{\begin{pmatrix}}
	\newcommand{\epm}{\end{pmatrix}}
\def\qed{\hfill $\Box$\medskip}
\def\IR{{\mathbb R}}
\def\IC{{\mathbb C}}
\def\cS{{\mathcal S}}
\def\cI{{\mathcal I}}
\def\cB{{\mathcal B}}
\def\cD{{\mathcal D}}
\def\cM{{\mathcal M}}
\def\cF{{\mathcal F}}
\def\bw{{\bf w}}
\def\bv{{\bf v}}
\def\bz{{\bf z}}
\def\diag{{\rm diag}\,}
\def\tr{{\rm tr}\,}
\def\conv{{\rm conv}\,}
\newtheorem{lemma}{Lemma}
\newtheorem{theorem}{Theorem}
\newtheorem{corollary}{Corollary}
\newtheorem{proposition}{Proposition}
\newtheorem{example}{Example}
\begin{document}


\title{Quantifying the coherence of pure quantum states}

\author{Jianxin Chen}%
\affiliation{Joint Center for Quantum Information and Computer Science, 
	University of Maryland, College Park, Maryland 20742, USA}%
\author{Shane Grogan}%
\affiliation{Department of Mathematics and Computer Science, Mount Allison University, Sackville, NB, Canada E4L 1E4}
\author{Nathaniel Johnston}
\affiliation{Department of Mathematics and Computer Science, Mount Allison University, Sackville, NB, Canada E4L 1E4}
\affiliation{Department of Mathematics and Statistics, University of Guelph, Guelph, ON, Canada N1G 2W1}
\author{Chi-Kwong Li}
\affiliation{Department of Mathematics, College of William and Mary,  Williamsburg, VA, USA  23187}
\author{Sarah Plosker}
\affiliation{Department of Mathematics \& Computer Science, Brandon University, Brandon,
	MB, Canada R7A 6A9}
	\affiliation{Department of Mathematics and Statistics, University of Guelph, Guelph, ON, Canada N1G 2W1}

\begin{abstract}
In recent years, several measures have been proposed for characterizing the coherence of a given quantum state. We derive several results that illuminate how these measures behave when restricted to pure states. Notably, we present an explicit characterization of the closest incoherent state to a given pure state under the trace distance measure of coherence.  We then use this result to show that the states maximizing the trace distance of coherence are exactly the maximally coherent states. We define the trace distance of entanglement and show that it coincides with the trace distance of coherence for pure states. Finally, we give an alternate proof to a recent result that the $\ell_1$ measure of coherence of a pure state is never smaller than its relative entropy of coherence. 
\end{abstract}

\date{\today}

\pacs{03.67.Ac, 03.65.Ta, 02.30.Mv, 03.67.Mn}

\maketitle

\section{I. Introduction}

One of the major goals in quantum information theory is to find effective ways of quantifying the amount of ``quantumness'' within a given system---that is, how much the system differs from any possible classical mechanical description of it. How this quantification is carried out varies heavily depending on context, however, as some quantum states might be useful for one quantum information processing task, yet useless for another.

When multiple quantum systems interact with each other, the resource of interest is typically \emph{entanglement}, the quantification of which has been well-investigated over the past two decades \cite{BDSW96,Bra05,HHT01,Hor01,Rai99,Shi95,Ste03,VT99}. However, when there is no interaction between different systems, the resource of interest is instead \emph{coherence}, or the amount that a state is in a superposition of a given set of mutually orthogonal states. With roots in quantum optics \cite{Glau63, Su63}, coherence is an essential operational resource in quantum information processing, and has been shown to be intimately related to entanglement \cite{YZCM15, YXGS15}; in fact, it has been shown that one can measure coherence via entanglement \cite{SSDBA15}.

Despite its usefulness, an effort to formalize the quantification of coherence has only begun somewhat more recently \cite{Abe06}. The defining properties of a \emph{proper} coherence measure were identified in \cite{BCP14}; for example, a state $\rho$ should have zero coherence under the proposed measure if and only if $\rho$ is \emph{incoherent} (i.e., it is diagonal in the pre-specified orthogonal basis, which we will always take to be the standard basis $\{\ket{i}\}_{i=1}^n$), since such states are exactly the ones that represent classical mixtures of the given basis states. We denote the set of all $n \times n$ matrices by $\cM_n$, the set of density matrices by $\cD_n$, and the set of incoherent states by $\mathcal{I}_n$, or simply $\cM$, $\cD$, and $\cI$ if the dimension is irrelevant or clear from context.

The two most widely-known coherence measures are the \emph{$\ell_1$-norm of coherence}, defined as the sum of the absolute values of the off-diagonal entries of the density matrix:
\[
C_{\ell_1}(\rho) := \sum_{i\neq j} |\rho_{ij}|,
\]
and the \emph{relative entropy of coherence} \cite{Abe06}:
\[
C_{\text{r}}(\rho) := S(\rho_{\operatorname{diag}}) - S(\rho),
\]
where $S(\cdot)$ is the von Neumann entropy and $\rho_{\operatorname{diag}}$ is the state obtained from $\rho$ by deleting all off-diagonal entries. Some other coherence measures that have been proposed recently include the \emph{trace distance of coherence} \cite{RPL15}, which is the trace norm distance to the closest incoherent state:
\[
C_{\tr}(\rho) := \min_{\delta\in \mathcal{I}}\|\rho-\delta\|_{\tr}=\min_{\delta\in \mathcal{I}}\sum_{i=1}^n|\lambda_i(\rho-\delta)|,
\]
where $\lambda_i(\rho-\delta)$ are the eigenvalues of the matrix $\rho-\delta$ and $\|\cdot\|_{\tr}$ is the trace norm, and the \emph{robustness of coherence} \cite{NBCPJA16}:
\[
C_{R}(\rho) := \min_{\tau \in \cD} \left\{ s \geq 0 \, \Big| \, \frac{\rho + s\tau}{1 + s} \in \mathcal{I}\right\}.
\]

The $\ell_1$-norm of coherence, relative entropy of coherence, and robustness of coherence have all been shown to be proper coherence measures, and it has been shown that the trace distance of coherence is a proper measure of coherence when restricted to qubit states or $X$~states \cite{RPL15}. Although the general case remains open, this partial result helps validate the fact that the trace distance is commonly used as a coherence measure. Additionally, simple formulas are known for all of these measures of coherence when restricted to pure states, except for the trace distance of coherence. Indeed, the $\ell_1$-norm of coherence and the relative entropy of coherence are \emph{defined} via explicit formulas, and the robustness of coherence of a pure state simply equals its $\ell_1$-norm of coherence \cite{NBCPJA16}. However, it was noted in \cite{RPL15} that it seems comparably difficult to compute the trace distance of coherence of a pure state, and evidence was given to suggest that a simple closed-form formula might not exist.

In this work, we investigate how these measures of coherence behave on pure states. 
In Section II, we use approximation theory to give characterizations of 
the best decoherent states for a given state with respect to the trace norm distance
and the spectral norm distance. One can use the results to check whether a decoherent
state is the best approximation for the given state in finitely many steps.
In Section III,  we give an ``almost formula'' for the trace distance of coherence of a pure state.
In particular, we show that it is given by one of $n$ different formulas (depending on the state), 
and which formula is the correct one can be determined simply by checking $\log_2(n)$ inequalities. 
Furthermore, one can construct the decoherent state nearest to the given pure state under the 
trace norm (and operator norm). We  also present examples and MATLAB code to demonstrate the 
efficacy of our method both analytically and numerically. 
In Section~IV, we  prove that the states maximizing the trace distance of coherence are 
exactly the maximally coherent states---another property that has already been known to 
hold for the other three measures of coherence.
In Section~V, we show that many measures of entanglement and coherence coincide, including the trace distance of coherence and the analogous trace distance of entanglement, 
which we define herein.
 In Section~VI, we give an alternate proof to a recent result \cite{RPL15} 
that says that the $\ell_1$ measure of coherence of a pure state is not smaller than its relative entropy 
of coherence. The result gives, as an immediate corollary,  an improvement to the known bound of the 
distillable entanglement of pure states in terms of their negativity. 
Finally, concluding remarks and open questions are discussed in Section~VII.

\section{II. Characterization of nearest decoherent states}
In this section we present some results
that allow us to give computable criterion to check whether 
a decoherent state $\rho$ is nearest to a given state $\rho$.
The results will also be used to construct the nearest coherent state
for a given pure state with respect to the trace distance by an efficient algorithm
(Theorem~\ref{thm:pure_algorithm}), and to prove that maximally coherent states $\rho$ yield 
the maximum value of $C_{\tr}(\rho)$ (Theorem~\ref{thm:maxcoherent}). 
We begin with a general result in approximation theory; for example, see \cite{Singer}.

\begin{proposition} \label{1.1}
	Suppose $W$ is a closed convex set 
	in a finite dimensional normed space $(V,|\!|\!|\cdot|\!|\!|)$, and $\bv\in V - W$.
	Then $\bw \in W$ is the best approximation of $\bv$ if and only if there is a linear functional 
	$f$ with $|\!|\!|f|\!|\!|^* \le 1$ such that $f(\bv-\bw) =|\!|\!|\bv-\bw|\!|\!|$ and $f(\bz) \ge 0$ for all $\bz \in V$
	such that $\bw-\bz \in W$.
\end{proposition}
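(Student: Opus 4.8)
The plan is to establish the two implications separately; the ``if'' direction is a short direct estimate, while the ``only if'' direction rests on the Hahn--Banach separation theorem.

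For the ``if'' direction, suppose such an $f$ exists. Given any $\bw' \in W$, apply the hypothesis to $\bz := \bw - \bw'$, which satisfies $\bw - \bz = \bw' \in W$, to get $f(\bw - \bw') \ge 0$. Then
\[
|\!|\!|\bv - \bw'|\!|\!| \ \ge\ f(\bv - \bw') \ =\ f(\bv - \bw) + f(\bw - \bw') \ \ge\ f(\bv - \bw) \ =\ |\!|\!|\bv - \bw|\!|\!| ,
\]
where the first inequality uses $|\!|\!|f|\!|\!|^* \le 1$. As $\bw' \in W$ was arbitrary, $\bw$ is a best approximation of $\bv$.

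For the ``only if'' direction, assume $\bw$ is a best approximation and put $d := |\!|\!|\bv - \bw|\!|\!|$, which is the distance from $\bv$ to $W$ and is strictly positive since $\bv \notin W$. The open ball $B := \{ u \in V : |\!|\!|u - \bv|\!|\!| < d \}$ is nonempty, open, convex, and disjoint from $W$ (any $\bw'' \in W$ has $|\!|\!|\bv - \bw''|\!|\!| \ge d$). The geometric Hahn--Banach theorem then yields a nonzero linear functional $g$ and a scalar $\alpha$ with $g(u) < \alpha \le g(\bw'')$ for every $u \in B$ and $\bw'' \in W$. In particular $g(\bw) \ge \alpha$; letting $t \uparrow 1$ in $u_t := \bv + t(\bw - \bv) \in B$ forces $g(\bw) \le \alpha$ as well, so $g(\bw) = \alpha$. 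Since $\sup_{u \in B} g(u) = g(\bv) + d\,|\!|\!|g|\!|\!|^*$ by the definition of the dual norm, and this supremum equals $\alpha = g(\bw)$, we obtain $g(\bw - \bv) = d\,|\!|\!|g|\!|\!|^*$. Setting $f := -g / |\!|\!|g|\!|\!|^*$ gives $|\!|\!|f|\!|\!|^* = 1$ and $f(\bv - \bw) = d = |\!|\!|\bv - \bw|\!|\!|$. Finally, if $\bz$ satisfies $\bw - \bz \in W$, then $g(\bw - \bz) \ge \alpha = g(\bw)$ gives $g(\bz) \le 0$, hence $f(\bz) \ge 0$, as required.

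The only delicate point is that the best approximation $\bw$ lies on the \emph{boundary} of $B$ rather than in its interior, so the strict inequality coming out of the separation theorem has to be promoted to $g(\bw) = \alpha$ by a limiting argument along a segment; combined with the dual-norm identity $\sup_{|\!|\!|h|\!|\!| < d} g(h) = d\,|\!|\!|g|\!|\!|^*$ used to normalize $f$, this is the heart of the argument. Everything else is bookkeeping: the hypothesis $\bv \notin W$ is used only to guarantee $d > 0$ so that $B$ is a genuine open set, convexity of $W$ is exactly what makes a separating functional exist, and finite-dimensionality ensures every linear functional is continuous so that no further care is needed in applying the separation theorem.
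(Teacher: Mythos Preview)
Your proof is correct. The paper itself does not prove this proposition at all; it merely states it as a known result in approximation theory and cites Singer's monograph, so there is no paper-proof to compare against. Your Hahn--Banach separation argument is the standard route: the ``if'' direction is the obvious estimate, and for the ``only if'' direction you separate the open ball $B$ from $W$, identify $g(\bw)=\alpha$ via the limiting segment $u_t$, and read off $g(\bw-\bv)=d\,|\!|\!|g|\!|\!|^*$ from the dual-norm computation of $\sup_B g$. The bookkeeping you flag (strict positivity of $d$, openness of $B$, automatic continuity in finite dimensions) is exactly what is needed, and it is handled correctly.
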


If $|\!|\!|\cdot|\!|\!|$ is a norm on $\mathbb{C}^n$, the linear functional $f$ in the above proposition
has the form  
$f(X) = \tr(MX)$ for some $M$ in the dual norm ball of $|\!|\!|\cdot|\!|\!|$. It is well known that the norms $\|\cdot\|_{\tr}$ and $\|\cdot\|$ are dual to each other, and their respective  norm balls equal 
\bes
\cB_{\tr}
&=& \{ A \in \cM_n: \|A\|_{\tr} \le 1\} \\
&=& \conv\{ \pm \ketbra{u}{u}: 
\ket{u} \hbox{ is a unit vector in }  \IC^n\},\, 
\textnormal{ and}  \\
\cB
&=& \{ A \in \cM_n: \|A\|\le 1\} \\
&=& \conv\{ A\in \cM_n:  A \hbox{ is unitary} \}.
\ees
Here is another simple observation which is useful for our discussion.

\begin{lemma} \label{extreme}
Suppose $D = \diag(d_1, \dots, d_n) \in \cI$. Then 
\begin{eqnarray*}
\cF & = & \{F\in \cM_n: D - F \in \cI\} \\
&= & \{\diag(f_1, \dots, f_n): \sum_{j=1}^n f_j = 0, \ d_j \ge f_j, \ j = 1 \dots, n\}
\end{eqnarray*}
is a convex set. A matrix $F \in \cS$ is an extreme point if and only if 
there is at most one strict inequality among the inequalities $d_j \ge f_j$ for
$j = 1, \dots, n$. Consequently, there are at most $n$ extreme points for the set $\cF$.
\end{lemma}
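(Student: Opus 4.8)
The plan is to first pin down $\cF$ explicitly, then read off convexity essentially for free, and finally characterize the extreme points by an elementary perturbation argument. To establish the displayed description of $\cF$, note that since $D$ is diagonal, $D-F\in\cI$ forces $D-F$ to be a diagonal density matrix, so $F$ itself must be diagonal; writing $F=\diag(f_1,\dots,f_n)$, the conditions $D-F\ge 0$ and $\tr(D-F)=1$ are equivalent to $d_j\ge f_j$ for every $j$ and to $\sum_j f_j=\sum_j d_j-1=0$, the last equality using $\tr D=1$. This is exactly the claimed set. Convexity is then immediate: the set is the intersection of the hyperplane $\{\sum_j f_j=0\}$ with the half-spaces $\{f_j\le d_j\}$ (indeed a compact polytope in the $(n-1)$-dimensional space of zero-trace diagonal matrices); equivalently, if $F_1,F_2\in\cF$ then $D-\tfrac12(F_1+F_2)=\tfrac12\big((D-F_1)+(D-F_2)\big)\in\cI$ by convexity of $\cI$, so $\tfrac12(F_1+F_2)\in\cF$.

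For the extreme points I would argue both implications directly. If at least two of the inequalities $d_j\ge f_j$ are strict, say $f_k<d_k$ and $f_\ell<d_\ell$ with $k\ne\ell$, fix $\varepsilon$ with $0<\varepsilon\le\min\{d_k-f_k,\,d_\ell-f_\ell\}$ and put $G=F+\varepsilon(\ketbra{k}{k}-\ketbra{\ell}{\ell})$ and $H=F-\varepsilon(\ketbra{k}{k}-\ketbra{\ell}{\ell})$. Both lie in $\cF$: the zero-trace condition is preserved, and every perturbed diagonal entry still satisfies its inequality by the choice of $\varepsilon$. Since $G\ne H$ and $F=\tfrac12(G+H)$, the matrix $F$ is not extreme. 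Conversely, suppose at most one inequality is strict. They cannot all be equalities, since that would give $\sum_j f_j=\sum_j d_j=1\ne 0$, contradicting membership in $\cF$; hence exactly one is strict, at some index $k$, which forces $f_j=d_j$ for $j\ne k$ and $f_k=-\sum_{j\ne k}d_j=d_k-1$. If now $F=\tfrac12(G+H)$ with $G=\diag(g_j)$, $H=\diag(h_j)\in\cF$, then for each $j\ne k$ the identity $d_j=f_j=\tfrac12(g_j+h_j)$ together with $g_j\le d_j$ and $h_j\le d_j$ forces $g_j=h_j=d_j$, and the zero-trace condition then forces $g_k=h_k=f_k$; thus $G=H=F$, so $F$ is extreme. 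The extreme points are therefore exactly the $n$ matrices $D-\ketbra{k}{k}$ for $k=1,\dots,n$, which are pairwise distinct since their $k$-th diagonal entries equal $d_k-1\ne d_k$; in particular there are at most $n$ of them.

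I do not expect a genuine obstacle here, as everything reduces to elementary polytope geometry. The only points demanding care are bookkeeping ones: that membership in $\cF$ already excludes the ``no strict inequality'' configuration because $\tr D=1$ while $\tr F=0$; that the perturbation $\pm\varepsilon(\ketbra{k}{k}-\ketbra{\ell}{\ell})$ respects all the constraints simultaneously; and that $\cF$ imposes no sign restriction on the $f_j$ themselves, so each of the $n$ candidate extreme points $D-\ketbra{k}{k}$ is genuinely attained, even when some $d_k=0$.
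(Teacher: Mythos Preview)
Your proof is correct and follows essentially the same route as the paper's: both view $\cF$ as a polytope in $\IR^n$ cut out by the equality $\sum_j f_j=0$ and the $n$ inequalities $f_j\le d_j$, and characterize the extreme points as those where $n-1$ of the inequalities are tight. The only difference is that the paper simply invokes the general linear-programming fact that a vertex in $\IR^n$ is determined by $n$ linearly independent active constraints, whereas you establish this special case directly via the perturbation $F\pm\varepsilon(\ketbra{k}{k}-\ketbra{\ell}{\ell})$, making your version more self-contained.
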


\begin{proof}
Note that $\cF$ can be viewed as a compact convex set in $\IR^n$ consisting of vectors
$(f_1,\dots, f_n)$ governed by one 
equality $\sum_{j=1}^n f_j = 0$, and $n$ inequalities $d_j \ge f_j$ with $j = 1, \dots, n$.
In $\IR^n$, one requires $n$ linearly independent equations from the governing 
equalities  and inequalities  to determine an extreme point. The results follows.
\end{proof}

By Proposition~\ref{1.1} and the facts
about $\cB_{\tr}$ and $\cB$, we have the following result for pure states (unit vectors).

\begin{proposition} \label{2.1}
	Let $\ket{x}\in \IC^n$ be a unit vector
	such that $\ketbra{x}{x} \in \cD_n - \mathcal{I}$, and $\delta \in \mathcal{I}$.
	Then $\ketbra{x}{x}-\delta$ has exactly
	one positive eigenvalue $\lambda_1$ with a unique
	rank one eigenprojection $H = \ketbra{v}{v}$ such that 
	$$\|\ketbra{x}{x} - \delta  \|_{\tr} = 2\|\ketbra{x}{x}-\delta \| = 2\lambda_1.$$
	Let $D \in \cI$ and $\cF = \{F\in \cM_n: D - F \in \cI\}$.
	The following conditions are equivalent.
	\begin{itemize}
		\item[{\rm (a)}] 
		$\|\ketbra{x}{x}-D \|_{\tr} = \min\{ \|\ketbra{x}{x}-\delta\|_{\tr}: \delta \in \mathcal{I}\}$.
		\item[{\rm (b)}]
		$\|\ketbra{x}{x}-D \| = \min\{ \|\ketbra{x}{x}-\delta\|: \delta \in \mathcal{I}\}$.
		\item[{\rm (c)}] For every (extreme) element $F$ in $\cF$,  $\langle v|F| v\rangle \ge 0$.
	\end{itemize}
\end{proposition}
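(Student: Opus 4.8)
The plan is to prove the spectral claim first, since the three equivalences all rest on it. Fix $\delta\in\mathcal{I}$ and write $A=\ketbra{x}{x}-\delta$. For any unit vector $\ket{u}$ orthogonal to $\ket{x}$ we have $\langle u|(\delta-\ketbra{x}{x})|u\rangle=\langle u|\delta|u\rangle\ge 0$, so $\delta-\ketbra{x}{x}$ is positive semidefinite on the codimension-one subspace $\ket{x}^\perp$; a Hermitian matrix with that property has at most one negative eigenvalue (otherwise its negative eigenspace would meet $\ket{x}^\perp$ nontrivially), and hence $A$ has at most one positive eigenvalue. Since $\ketbra{x}{x}\notin\mathcal{I}$ forces $A\neq 0$, while $\tr A=1-1=0$ forces $A$ to have eigenvalues of both signs, $A$ has exactly one (hence simple) positive eigenvalue $\lambda_1>0$, with a unique rank-one eigenprojection $H=\ketbra{v}{v}$. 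Denoting the other eigenvalues $\lambda_2,\dots,\lambda_n\le 0$, the relation $\sum_i\lambda_i=0$ gives $\lambda_1=\sum_{i\ge 2}|\lambda_i|$, whence $\|A\|_{\tr}=\sum_i|\lambda_i|=2\lambda_1$, while $|\lambda_j|\le\lambda_1$ for every $j$ gives $\|A\|=\lambda_1$. This proves the first assertion, and as a free by-product the identity $\|\ketbra{x}{x}-\delta\|_{\tr}=2\|\ketbra{x}{x}-\delta\|$ holds for \emph{every} $\delta\in\mathcal{I}$.

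Granting this, (a)$\Leftrightarrow$(b) is immediate, since the two objective functions differ only by the factor $2$ and hence have the same minimizers. For (b)$\Leftrightarrow$(c) I would first note that, by Lemma~\ref{extreme}, $\cF$ is compact and convex with finitely many extreme points and so equals their convex hull; therefore a linear functional is nonnegative on every (extreme) element of $\cF$ if and only if it is nonnegative on all of $\cF$, and I will use condition (c) in that form.

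For (c)$\Rightarrow$(b) I would apply Proposition~\ref{1.1} in the real normed space of Hermitian matrices with norm $\|\cdot\|$, taking $W=\mathcal{I}$, $\bv=\ketbra{x}{x}$, $\bw=D$, and the functional $f(X)=\tr(\ketbra{v}{v}X)=\langle v|X|v\rangle$. Since $\ketbra{v}{v}\in\cB_{\tr}$, this $f$ has dual norm at most $1$; the spectral claim gives $f(\ketbra{x}{x}-D)=\lambda_1=\|\ketbra{x}{x}-D\|$; and (c) says precisely that $f(\bz)\ge 0$ whenever $D-\bz\in\mathcal{I}$. Proposition~\ref{1.1} then certifies that $D$ is the operator-norm nearest incoherent state. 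For the converse (b)$\Rightarrow$(c), fix $F\in\cF$, so that $D-F\in\mathcal{I}$, and set $\delta_t=(1-t)D+t(D-F)=D-tF$, which lies in $\mathcal{I}$ for $t\in[0,1]$ by convexity. Optimality of $D$ gives $\|\ketbra{x}{x}-\delta_t\|\ge\|\ketbra{x}{x}-D\|$ for all such $t$, and by the spectral claim the left side equals the unique positive eigenvalue $\lambda_1(A+tF)$. Since $\lambda_1(A)$ is a simple eigenvalue with eigenvector $\ket{v}$, the map $t\mapsto\lambda_1(A+tF)$ is differentiable at $t=0$ with derivative $\langle v|F|v\rangle$; since it attains a minimum over $[0,\epsilon]$ at $t=0$, that derivative is nonnegative, i.e. $\langle v|F|v\rangle\ge 0$. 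As $F\in\cF$ was arbitrary, (c) holds.

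The main obstacle is the spectral claim, and in particular the observation that $\delta-\ketbra{x}{x}$ is positive semidefinite on $\ket{x}^\perp$: this one fact pins down the sign pattern of the spectrum of $\ketbra{x}{x}-\delta$, makes the trace-norm and operator-norm minimizations interchangeable, and---via simplicity of $\lambda_1$---is exactly what legitimizes the first-order eigenvalue perturbation used in the last step. Everything past it is bookkeeping together with one clean application of Proposition~\ref{1.1}.
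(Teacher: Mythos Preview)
Your argument is correct, and it diverges from the paper's in two places worth noting. For the spectral claim, the paper simply invokes Weyl's inequality to say that the eigenvalues of $\ketbra{x}{x}-\delta$ interlace those of $-\delta$, forcing $\lambda_1>0\ge\lambda_2\ge\cdots\ge\lambda_n$; your observation that $\delta-\ketbra{x}{x}$ is positive semidefinite on $\ket{x}^\perp$ gives the same conclusion by a self-contained dimension count, which is a nice alternative. More substantively, for the implication (b)$\Rightarrow$(c) the paper again goes through Proposition~\ref{1.1}: it argues that among all Hermitian contractions $H$ with $\tr((\ketbra{x}{x}-D)H)=\|\ketbra{x}{x}-D\|_{\tr}$, the only one is $H=\ketbra{v}{v}$, so the existential condition from Proposition~\ref{1.1} collapses to the concrete one in (c). You bypass this uniqueness analysis entirely and instead exploit simplicity of $\lambda_1$ through first-order eigenvalue perturbation along the segment $\delta_t=D-tF$. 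Your route is slightly more elementary in that it never needs to characterize the equality case in the trace-norm/operator-norm duality; the paper's route is more uniform, using the same abstract best-approximation criterion in both directions and thereby making clearer why the very same $D$ is optimal for both norms.
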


\begin{proof}
Note that if $\ketbra{x}{x}$ is not a diagonal matrix, then  
$\ketbra{x}{x} - \delta$ has eigenvalues
$\lambda_1 > 0 \ge \lambda_2 \ge \cdots \ge \lambda_n$ by Weyl's inequality.
Because $\tr(\ketbra{x}{x}-\delta) = \sum_{j=1}^n \lambda_j = 0$,  we have
$\|\ketbra{x}{x} - \delta \| = \lambda_1$ and $\|\ketbra{x}{x} -\delta \|_{\tr} 
= \lambda_1 - \sum_{j=2}^n \lambda_j = 2\lambda_1$
So, the first assertion, and the equivalence of (a) and (b) follow. 
In particular, the same matrix $D$ minimizes the trace norm and the operator norm.

A matrix $D \in \mathcal{I}$ is best approximation of $\ketbra{x}{x}$ with respect to the 
$\|\cdot\|_{\tr}$ if an only if there is an element $H$
in the dual norm ball of $\|\cdot\|$ such that
$\tr(\ketbra{x}{x}-D)H = \|\ketbra{x}{x}-D \|$ and $\tr(HF) \ge 0$ for any
$F$ such that $D -F \in \mathcal{I}$.
Suppose $H$ has spectral decomposition $ \sum_{j=1}^n \xi_j \ketbra{u_j}{u_j}$.
Then $\tr(\ketbra{x}{x}-D)H = \lambda_1$  can happen 
if and only if $H = \ketbra{v}{v}$.
Thus, conditions (a) and (c) are equivalent.
By standard results in convex analysis, 
$\langle v|F| v\rangle \ge 0$ 
for every element $F$ in $\cF$
if and only if 
$\langle v|F| v\rangle \ge 0$ 
for every extreme element $F$ in $\cF$.
\end{proof}

By Proposition \ref{2.1} (c) and Lemma \ref{extreme}, 
one can easily check whether a given $D \in \cI$ is nearest to a 
given pure state $\ketbra{x}{x}$ in finitely many steps. 

For mixed states, we have the following results.

\begin{proposition}
	Let $A \in \cD_n$, $D \in \cI$, and $\cF = \{F\in \cM_n: D - F \in \cI$.
	The following conditions are equivalent.
	\begin{itemize}
		\item[{\rm (a)}] 
		$\|A-D \|_{\tr} = \min\{ \|A-\delta\|_{\tr}: \delta\in \cI\}$.
		\item[{\rm (b)}]
		There is a Hermitian contraction $H \in \cM_n$ such that
		$\tr((A-D)H) = \|A-D\|_{\tr}$ and $\tr(FH) \ge 0$
		for every (extreme) element $F$ in $\cF$.
	\end{itemize}
\end{proposition}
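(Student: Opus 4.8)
The plan is to read off this proposition as a direct specialization of Proposition~\ref{1.1}, taking $|\!|\!|\cdot|\!|\!|$ to be the trace norm and using that the operator norm is its dual. First I would handle the case $A \in \cI$ separately, since it is excluded from Proposition~\ref{1.1} (which assumes $\bv \notin W$): then the minimum in (a) equals $0$ and is attained only at $\delta = A$, so (a) is equivalent to $D = A$; if $D = A$ then $H = I$ witnesses (b) because $\tr((A-D)I) = 0 = \|A-D\|_{\tr}$ and $\tr(FI) = \sum_j f_j = 0$ for every $F = \diag(f_1,\dots,f_n) \in \cF$, while if $D \ne A$ then $F_0 := D - A$ lies in $\cF$ (its $j$th diagonal entry $d_j - a_j$ satisfies $d_j - (d_j - a_j) = a_j \ge 0$), so any $H$ as in (b) would force $\tr((A-D)H) = -\tr(F_0 H) \le 0 < \|A-D\|_{\tr}$, contradicting $\tr((A-D)H) = \|A-D\|_{\tr}$. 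Hence (a) and (b) are equivalent (both false) in that case as well.

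With $A \notin \cI$ in hand (so $A \ne D$ and $A - D \ne 0$), I would apply Proposition~\ref{1.1} with $V$ the \emph{real} vector space of $n \times n$ Hermitian matrices, $|\!|\!|\cdot|\!|\!| = \|\cdot\|_{\tr}$, $W = \cI$ (a compact convex subset of $V$), $\bv = A$, and $\bw = D$; since $\cI$ consists of Hermitian matrices, ``best approximation from $\cI$'' means the same whether the ambient space is $V$ or $\cM_n$. By the displayed descriptions of $\cB_{\tr}$ and $\cB$, the norm dual to the trace norm on $V$ is the operator norm, so a real-linear functional $f$ on $V$ with $|\!|\!|f|\!|\!|^* \le 1$ is precisely $f(X) = \tr(HX)$ for a Hermitian contraction $H$. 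Under this identification the condition $f(\bv - \bw) = |\!|\!|\bv-\bw|\!|\!|$ becomes $\tr((A-D)H) = \|A-D\|_{\tr}$, and the condition $f(\bz) \ge 0$ whenever $\bw - \bz \in W$ becomes $\tr(HF) \ge 0$ for every $F$ with $D - F \in \cI$, i.e.\ every $F \in \cF$. Thus Proposition~\ref{1.1} immediately yields: (a) holds if and only if there is a Hermitian contraction $H$ with $\tr((A-D)H) = \|A-D\|_{\tr}$ and $\tr(FH) \ge 0$ for every $F \in \cF$.

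It remains to replace ``every $F \in \cF$'' by ``every extreme $F \in \cF$.'' By Lemma~\ref{extreme}, $\cF$ is compact and convex with at most $n$ extreme points, so by the Krein--Milman theorem every element of $\cF$ is a convex combination of extreme points; since $F \mapsto \tr(FH)$ is linear, nonnegativity on all of $\cF$ is equivalent to nonnegativity on the extreme points. This is the same reduction used at the end of the proof of Proposition~\ref{2.1}, and together with the previous paragraph it gives the stated equivalence of (a) and (b).

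I do not expect a genuine obstacle here: the mathematical content is carried entirely by Proposition~\ref{1.1} and trace-norm/operator-norm duality, just as in Proposition~\ref{2.1}; the only structural difference is that for a general mixed $A$ the matrix $A-D$ may have several positive eigenvalues, so $H$ cannot be taken to be rank one and one retains the weaker descriptor ``Hermitian contraction'' (one may in fact take $H$ to be the difference of the spectral projections onto the positive and negative parts of $A-D$). The two points that warrant a line of care are (i) taking the ambient space to be the real space of Hermitian matrices, so that ``Hermitian contraction'' is exactly the dual unit ball and $f$ is real-valued, and (ii) the bookkeeping in the case $A \in \cI$, where Proposition~\ref{1.1} does not literally apply.
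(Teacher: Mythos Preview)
Your argument is correct and follows the same route as the paper: apply Proposition~\ref{1.1} with the trace norm, use that its dual is the operator norm so the witnessing functional is $X\mapsto \tr(HX)$ for a contraction $H$, and reduce ``every $F\in\cF$'' to the extreme points by convexity. Your write-up is in fact more careful than the paper's one-line proof, since you handle the excluded case $A\in\cI$ and explain why working over the real space of Hermitian matrices forces $H$ to be a \emph{Hermitian} contraction.
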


\begin{proof}
By Proposition~\ref{1.1} and the remark after it, condition (a) holds
if and only if there is $H$ in the dual norm ball of the trace norm satisfying condition (b).
\end{proof}

We can obtain more information about the matrix $H$ in condition (b) of the above proposition 
using the spectral decomposition of $A-D
= \sum_{j=1}^p \mu_j \ketbra{u_j}{u_j} - \sum_{j=1}^q \nu_j \ketbra{v_j}{v_j}$,
where 
$$\mu_1, \dots, \mu_p, \nu_1, \dots, \nu_q > 0.$$ 
Then 
$$H = \sum_{j=1}^p\ketbra{u_j}{u_j} - \sum_{j=1}^q \ketbra{v_j}{v_j} + \sum_{j=1}^{n-p-q} \xi_j 
\ketbra{z_j}{z_j}$$
so that 
$\{\ket{u_1}, \dots, \ket{u_p}, \ket{v_1}, \dots, \ket{v_q}, \ket{z_1}, \dots, \ket{z_{n-p-q}}\}$ 
is an orthonormal basis for ${\mathbb C}^n$. 
Let $U$ be the unitary matrix whose columns are these basis vectors.
Then 
$$U^*(A-D)U = X_1 \oplus -X_2 \oplus 0_{n-p-q}$$ for some nonnegative diagonal 
matrices $X_1 \in \cM_p, X_2 \in \cM_q$, and $H$ will be of the form
$U^*(I_p\oplus -I_q \oplus X_3) U$
for some Hermitian contraction $X_3 \in \cM_{n-p-q}$. 

In particular,  if the best approximation 
element  $D\in {\mathcal I}$ of $A$ is such that 
$A - D$ is invertible, then $p+q = n$, and we have a Hermitian unitary 
$H = U^*(I_p \oplus -I_q)U$ satisfying the optimality condition.
Suppose $p+q < n$ and if $F_1, \dots, F_\ell$ are the extreme points of
$\cF$. Then we need to find a Hermitian contraction $X_3 \in \cM_{n-p-q}$
such that $-I_{n-p-q} \le X_3 \le I_{n-p-q}$ (in the positive semidefinite ordering) and
$$\tr U^*(I_p \oplus -I_q \oplus X_3)U F_j = \alpha_j + \tr X_3 \ge 0, \quad j = 1, \dots, \ell,$$  
where $\alpha_j = \tr(I_p \oplus -I_q \oplus 0_{n-p-q})U^*F_jU$ and
$G_j$ is the matrix obtained from $U^*F_jU$ by removing its first $p+q$ rows and columns.
One may check the existence of $X_3$ efficiently by positive semidefinite programming.

\medskip
Using a similar argument, we have the following.

\begin{proposition} \label{2.4}
	Let $A \in \cD_n$, $D \in \cI$ and $\cF = \{F: D - F \in \cF\}$.
	 Then the following are equivalent.
	\begin{itemize}
		\item[{\rm (a)}] 
		$\|A-D\| = \min\{ \|A-\delta\|: \delta \in \cI\}$.
		\item[{\rm (b)}] There is a Hermitian matrix $H \in \cM_n$ with $\|H\|_\tr = 1$ such that
		$|\tr((A-D)H)| = \|A-D \|$ and 
		$\tr(HF) \ge 0$ for every (extreme) element in $\cF$.
	\end{itemize}
\end{proposition}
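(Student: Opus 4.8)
The plan is to mirror the argument behind the two preceding propositions, simply swapping the roles of the trace norm and the operator norm, which are dual to one another. First I would work in the real normed space $V$ of $n\times n$ Hermitian matrices equipped with the operator norm $\|\cdot\|$. Since $\cI\subseteq V$ is closed and convex and, in the only nontrivial case $A\notin\cI$, we have $A\in V-\cI$, Proposition~\ref{1.1} applies. Using the remark following Proposition~\ref{1.1} together with the fact that the dual of $(V,\|\cdot\|)$ is $(V,\|\cdot\|_\tr)$ via the pairing $(H,X)\mapsto\tr(HX)$ with $H$ Hermitian, Proposition~\ref{1.1} then says exactly: $D$ achieves $\min\{\|A-\delta\|:\delta\in\cI\}$ if and only if there is a Hermitian $H$ with $\|H\|_\tr\le 1$ such that $\tr((A-D)H)=\|A-D\|$ and $\tr(HF)\ge 0$ for every $F$ with $D-F\in\cI$, i.e.\ for every $F\in\cF$.

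Next I would reconcile this with the form stated in (b). From $\|A-D\|=\tr((A-D)H)\le\|H\|_\tr\,\|A-D\|$ and $A\neq D$ we get $\|H\|_\tr\ge 1$, hence $\|H\|_\tr=1$; and $\tr((A-D)H)=\|A-D\|\ge 0$ gives $|\tr((A-D)H)|=\|A-D\|$. This yields (a)$\Rightarrow$(b), and also (b)$\Rightarrow$(a) whenever the $H$ supplied in (b) happens to satisfy $\tr((A-D)H)=+\|A-D\|$. The remaining case, $\tr((A-D)H)=-\|A-D\|$ with $\tr(HF)\ge 0$ on $\cF$, is handled by passing to $-H$: one checks that in this situation $\tr(HF)$ must in fact vanish on all of $\cF$, so that $-H$ satisfies the hypotheses of Proposition~\ref{1.1}. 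Finally, because $F\mapsto\tr(HF)$ is linear and, by Lemma~\ref{extreme}, $\cF$ is a compact convex set with only finitely many extreme points, the condition ``$\tr(HF)\ge 0$ for all $F\in\cF$'' is equivalent to ``$\tr(HF)\ge 0$ for all extreme $F\in\cF$'', which accounts for the parenthetical ``(extreme)''.

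I expect essentially all the genuine work to be bookkeeping of three kinds: making sure Proposition~\ref{1.1} is being invoked on the space of Hermitian matrices (so the representing functional is a Hermitian $H$ and the pertinent dual ball is the Hermitian trace-norm ball); upgrading $\|H\|_\tr\le 1$ to $\|H\|_\tr=1$; and matching the absolute value in (b) to the signed equality $\tr((A-D)H)=\|A-D\|$ coming out of Proposition~\ref{1.1}. The sign-matching is the one place I would expect to need care, and the identity $\cF=\{D-D':D'\in\cI\}$ is what makes it manageable, since it lets one compare $\|A-D\|$ with $\|A-D'\|$ for a competing incoherent state $D'$ directly through $\tr(HF)$ with $F=D-D'\in\cF$, and in particular lets one rule out (or absorb) the negative-sign case.
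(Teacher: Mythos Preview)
Your approach---invoking Proposition~\ref{1.1} with the duality between $\|\cdot\|$ and $\|\cdot\|_{\tr}$---is exactly the paper's, and your bookkeeping (restricting to Hermitian matrices, upgrading $\|H\|_{\tr}\le 1$ to equality, reducing to extreme $F$ via Lemma~\ref{extreme}) is more explicit than the paper's one-line proof, which simply cites Proposition~\ref{1.1} and restates (b).

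The gap is in your handling of the negative-sign case, and it is a real one: your claim that ``$\tr(HF)$ must in fact vanish on all of $\cF$'' is false, and indeed (b) as stated, with the absolute value, does \emph{not} imply (a). Take $n=2$, $A=\ketbra{x}{x}$ with $\ket{x}=(1,1)^t/\sqrt{2}$, and $D=\diag(1,0)$. Then $A-D$ has eigenvalues $\pm 1/\sqrt{2}$, with unit eigenvector $\ket{v_-}\propto(\sqrt{2}+1,-1)^t$ for the eigenvalue $-1/\sqrt{2}$. Setting $H=\ketbra{v_-}{v_-}$ gives $\|H\|_{\tr}=1$ and $|\tr((A-D)H)|=\|A-D\|$; since $\cF=\{t\,\diag(1,-1):t\in[0,1]\}$, one computes $\tr(HF)=t(1+\sqrt{2})/(2+\sqrt{2})\ge 0$, strictly positive for $t>0$. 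So (b) holds, yet $D$ is not optimal because $\|A-\diag(1/2,1/2)\|=1/2<1/\sqrt{2}$. Comparing with the preceding proposition for the trace norm, where (b) reads $\tr((A-D)H)=\|A-D\|_{\tr}$ without absolute value, the absolute value here appears to be an oversight in the statement; the paper's own proof does not address the sign issue either.
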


\begin{proof}
By Proposition~\ref{1.1} and the remark after it, condition (a) holds
if and only if there is $P$ with $\|H\|_\tr = 1$ satisfying 
$|\tr((A-D)H)| = \|A-D \|$ and $\tr(HF) \ge 0$ for every (extreme) element in $\cF$.
\end{proof}

Again, one may get more information about the matrix $H$ in Proposition \ref{2.4} (b)
using the spectral decomposition of $A-D$. Suppose $\lambda_1$ and $\lambda_n$ are the largest 
and smallest eigenvalue of $A-D$ with eigenprojections $P_1$ and $P_2$, respectively.
If $\|A-D\| = \lambda_1 > |\lambda_n|$, then $H$ is a density matrix such that $P_1 - H$ is positive
semidefinite; if $\|A-D\| = |\lambda_n| > \lambda_1$, then
$-H$ is a density matrix such that $P_2 + H$ is positive
semidefinite; if $\|A-D\| = \lambda_1 = |\lambda_n|$, then 
$H = rQ_1 - (1-r)Q_2$ for some $r \in [0,1]$ and density matrices $Q_1, Q_2$ such that
$P_1 - Q_1, P_2-Q_2$ are positive semidefinite. 
Again, one can use positive semidefinite programming method to check the existence of 
$H$ satisfying $\tr FH \ge 0$ for the finite set of extreme points of the set $\cF$.

\section{III. The trace distance of coherence of a pure state}

We now present a characterization of $C_{\tr}(\ketbra{x}{x})$, where $\ket{x} \in \mathbb{C}^n$ is an arbitrary pure state. Note that there is a diagonal unitary $U$ and a permutation matrix $P$ such that $PU\ket{x}$ is a unit vector having non-negative entries 
$x_1 \ge \cdots \ge x_n\ge 0$ in descending order. 
We then have 
$$\|\ketbra{x}{x}-\delta \|_{\tr} = \|PU(\ketbra{x}{x} - \delta )U^*P^t\|_{\tr}$$
for any $\delta \in \cI$. 
So, we may replace $\ket{x}$ by $PU\ket{x}$. Without loss of generality, 
we will use this simplification to find the best approximation for 
$\ket{x} = (x_1, \dots, x_n)^t$ with $x_1 \ge \cdots \ge x_n\ge 0$, but we note that it straightforwardly applies to the general setting of an arbitrary unit vector in $\mathbb{C}^n$.

With this modification, we have the following.
\begin{theorem}\label{thm:pure_algorithm}
	Suppose $\ket{x} = (x_1, \dots, x_n)^t$ is a unit vector with entries $x_1 \ge \cdots \ge x_n \ge 0$.
	Let
	$s_\ell = \sum_{j=1}^\ell x_j$, $m_\ell = \sum_{j=\ell+1}^n x_j^2$, and $p_\ell = s_\ell^2-1-\ell m_\ell$ for $\ell \in \{1,\dots, n\}$.
	There is a maximum integer $k \in \{1, \dots, n\}$ satisfying 
	\begin{equation} \label{xkqk}
	x_k > q_k := \frac{1}{2ks_k}\left(p_k+\sqrt{p_k^2 + 4km_ks_k^2}\right).
	\end{equation}
	The unique best approximation of $\ketbra{x}{x}$ in $\cI$ with respect to the trace norm (and the operator norm) is $D  = \diag(d_1, \dots, d_k, 0, \dots, 0) \in \cI$ with
	$$d_j=\displaystyle \frac{x_j - q_k}{s_k - kq_k} \quad \hbox{ for } \quad 1\leq j \leq k.$$ 
	Furthermore,
	\bes
	C_{\tr}(\ketbra{x}{x}) = \|\ketbra{x}{x} - D \|_{\tr} &=& 2(q_k s_k + m_k),\\
\text{ and } \qquad 	\|\ketbra{x}{x} - D \| &=& q_k s_k + m_k.
	\ees
\end{theorem}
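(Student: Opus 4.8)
The plan is to write down the optimal $D$ explicitly, check that it is an incoherent density matrix, exhibit the positive eigenvalue and eigenvector of $\ketbra{x}{x}-D$, and then verify the optimality criterion of Proposition~\ref{2.1}(c). We may assume $\ketbra{x}{x}\notin\cI$ (otherwise the stated formula returns $D=\ketbra{x}{x}$ and $C_{\tr}=0$, which is correct). The guiding idea comes from Proposition~\ref{2.1}(c) together with Lemma~\ref{extreme}: the extreme points of $\cF$ are the $n$ matrices $F^{(r)}=D-\ketbra{r}{r}$, so $D$ is optimal exactly when the positive eigenvector $v$ of $\ketbra{x}{x}-D$ satisfies $|v_r|^2\le\langle v|D|v\rangle$ for every $r$. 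This suggests seeking a $D$ supported on $\{1,\dots,k\}$ whose eigenvector $v$ is \emph{constant} on that support. For $M=\ketbra{x}{x}-D$ the eigenvalue equation reads $v_i=\langle x|v\rangle\,x_i/(\lambda+d_i)$; demanding that $v_i$ be constant for $i\le k$ makes $x_i/(\lambda+d_i)$ constant there, and then the constraint $\sum_i d_i=1$ together with the consistency relation $\sum_i x_i^2/(\lambda+d_i)=1$ collapse to a single quadratic in one unknown. Solving it yields precisely $q_k$ as in~\eqref{xkqk}, together with $\lambda=q_ks_k+m_k$, $\gamma:=s_k-kq_k$, $\lambda\gamma=q_k$, and $d_j=(x_j-q_k)/\gamma$ for $j\le k$.

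The routine verifications are then: (i) $\gamma=\sum_{j\le k}(x_j-q_k)>0$ and $d_j>0$ because $x_j\ge x_k>q_k$, and $\sum_{j\le k}d_j=1$, so $D\in\cI$; (ii) $\lambda=q_ks_k+m_k>0$, since $q_k=0$ would force $m_k=0$ and $s_k^2\le 1$, hence $\ketbra{x}{x}\in\cI$, which is excluded; (iii) with $c>0$ fixed by $\|v\|=1$ and $v_j=c\gamma$ for $j\le k$, $v_j=cx_j/\lambda$ for $j>k$, one checks $\langle x|v\rangle=c$ and $Mv=\lambda v$ using the relations above. Since $M$ is a rank-one positive matrix minus a positive matrix, Weyl's inequality gives $\lambda_2(M)\le 0$, so $\lambda$ is the unique positive eigenvalue of $M$ and $v$ (up to phase) its eigenvector — this is exactly the setting of Proposition~\ref{2.1}. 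Condition (c) now reads $|v_r|^2\le\langle v|D|v\rangle=c^2\gamma^2$ for all $r$; for $r\le k$ this holds with equality, so everything reduces to $|v_r|^2=c^2x_r^2/\lambda^2\le c^2\gamma^2$ for $r>k$, i.e.\ $x_r\le\lambda\gamma=q_k$, and since the $x_j$ are decreasing this is the single inequality $x_{k+1}\le q_k$ (vacuous when $k=n$).

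This last inequality is the crux, and the main obstacle: what maximality of $k$ gives us directly is only $x_{k+1}\le q_{k+1}$, so we must transfer information between the $(k+1)$-st and $k$-th quadratics. The bridge is the algebraic identity $h_\ell(x_{\ell+1})=h_{\ell+1}(x_{\ell+1})$ for every $\ell<n$, where $h_\ell(q):=\ell s_\ell q^2-p_\ell q-m_\ell s_\ell$ is the quadratic arising in the construction: substituting $s_{\ell+1}=s_\ell+x_{\ell+1}$, $m_{\ell+1}=m_\ell-x_{\ell+1}^2$, and $p_{\ell+1}=s_{\ell+1}^2-1-(\ell+1)m_{\ell+1}$, the cubic terms cancel and both sides collapse to $\ell s_\ell x_{\ell+1}^2-(s_\ell^2-1-\ell m_\ell)x_{\ell+1}-m_\ell s_\ell$. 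Each $h_\ell$ opens upward, its larger root is $q_\ell$, and its smaller root is $\le 0\le x_{\ell+1}$, so $h_\ell(x_{\ell+1})\le 0\iff x_{\ell+1}\le q_\ell$. Hence $x_{k+1}\le q_{k+1}\Rightarrow h_{k+1}(x_{k+1})\le 0\Rightarrow h_k(x_{k+1})\le 0\Rightarrow x_{k+1}\le q_k$. That $k$ is well defined is easy: a one-line computation gives $q_1=(x_1^2-1+\sqrt{1-x_1^2})/x_1$, and $x_1>q_1\iff x_1>0$, which always holds.

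With condition (c) established, Proposition~\ref{2.1} shows $D$ is the best approximation in $\cI$ under both the trace norm and the operator norm, and that $\|\ketbra{x}{x}-D\|=\lambda=q_ks_k+m_k$, hence $C_{\tr}(\ketbra{x}{x})=\|\ketbra{x}{x}-D\|_{\tr}=2(q_ks_k+m_k)$. For uniqueness, if $D'$ were another best approximation then so is $(D+D')/2$, and convexity of $\lambda_1(\cdot)$ forces $\ketbra{x}{x}-D$ and $\ketbra{x}{x}-D'$ to share a top eigenvector $v$; from $(\lambda+d_j)v_j=\langle x|v\rangle x_j$ with $\langle x|v\rangle\ne 0$ one gets $v_j\ne 0$ exactly when $x_j\ne 0$, so $(D-D')v=0$ yields $d_j=d_j'$ wherever $x_j\ne 0$, and a trace count forces $d_j'=0$ on the remaining coordinates, whence $D'=D$. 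Besides the inequality $x_{k+1}\le q_k$, the only non-mechanical ingredient is discovering the ansatz ($D$ supported on an initial segment, $v$ constant on it); once the identity $h_\ell(x_{\ell+1})=h_{\ell+1}(x_{\ell+1})$ is in hand the rest is bookkeeping.
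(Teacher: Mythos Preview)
Your proof is correct and follows essentially the same route as the paper: the same ansatz for $D$ and the eigenvector, the same quadratic $h_\ell$ (the paper calls it $f_\ell$), the same key identity $h_\ell(x_{\ell+1})=h_{\ell+1}(x_{\ell+1})$ to pass from $x_{k+1}\le q_{k+1}$ to $x_{k+1}\le q_k$, and the same appeal to Proposition~\ref{2.1}(c). The only notable differences are cosmetic: you verify condition~(c) by explicitly listing the extreme points $F^{(r)}=D-\ketbra{r}{r}$ of $\cF$ (the paper instead checks $\langle v|F|v\rangle\ge 0$ for arbitrary $F\in\cF$ directly), and for uniqueness you use convexity of $\lambda_1(\cdot)$ to force a shared top eigenvector, whereas the paper invokes an equality case of the triangle inequality from \cite{CLH}; both arguments are short and yield the same conclusion.
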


\noindent \emph{Proof:} \rm We may assume that $x_n > 0$, and use continuity for the general case.

First, we prove that there exists a matrix
$D  = \diag(d_1, \dots, d_k, 0, \dots, 0)$ such that $\ketbra{x}{x} - D$ has an eigenvector 
$\ket{v} = (q_k,\dots, q_k, x_{k+1}, \dots, x_n)^t$ corresponding to its largest eigenvalue (we will later show that this $D$ is the same one from the statement of the theorem). To this end, let 
$d_1, \dots, d_k, q, \mu > 0$ be variables satisfying the matrix equation 
$$(\ketbra{x}{x} - D)\ket{v} = \mu \ket{v} \ \hbox{ with } \   
\ket{v} = (q, \dots, q, x_{k+1}, \dots, x_n)^t.$$
Then  $\ketbra{x}{x}\ket{v} = D\ket{v} + \mu \ket{v}$. 
Because $\langle x|v\rangle = qs_k + m_k$, we have
\bes
&& (qs_k+m_k)(x_1, \dots, x_k, x_{k+1}, \dots, x_n)^t \\ & = &
(d_1q + \mu q, \dots, d_kq + \mu q, \mu x_{k+1}, \dots, \mu x_n)^t.
\ees
Summing up the first $k$ entries of the vectors on the left and right sides, we have
\begin{equation}\label{eq1}
(qs_k + m_k)s_k = k \mu q + q\sum_{j=1}^k d_j = k\mu q + q.
\end{equation}
Comparing the last $n-k$ entries of the vectors on both sides, we have
\begin{equation}\label{eq2}
qs_k + m_k = \mu.
\end{equation}
Substituting (\ref{eq2}) into (\ref{eq1}) to eliminate $\mu$, we have 
\begin{equation}\label{eq3}
f_k(q) := ks_k q^2 - q(s_k^2 -1-km_k) - s_km_k = 0.
\end{equation} 
Letting $q_k$ be the larger zero of $f_k(q)$, we have 
$$q_k = \frac{1}{2ks_k}\left(p_k + \sqrt{p_k^2+4km_ks_k^2}\right) > 0,$$
where $p_k = s_k^2-1-k m_k$. Note that 
$$q_1 = \left(\sqrt{1-x_1^2}+ x_1^2 - 1\right)/x_1 < x_1,$$
so there indeed exists a largest integer $k \in \{1, \dots, n\}$ such that
$x_k > q_k$. From this point forward, we fix $k$ at this largest possible value, and we note that $s_k = x_1 + \cdots + x_k \ge kx_k \ge kq_k$.
Define 
$$d_j := (x_j - q_k)/(s_k - kq_k) > 0 \quad \hbox{ for } j = 1, \dots, k.$$
By our construction, we have
$$(\ketbra{x}{x}-D) \ket{v} = \mu \ket{v}.$$
Furthermore, by (\ref{eq2}) we have
\bes
\|\ketbra{x}{x} - D \| & = & \mu = q_k s_k + m_k \qquad \hbox{ and } \\ 
\|\ketbra{x}{x} - D \|_{\tr} & = & 2\mu = 2(q_k s_k + m_k).
\ees
Next, we will prove that $q_k \geq x_{k+1}$ if $k < n$. To this end, let 
$f_k(q)$ be the polynomial defined by (\ref{eq3}). 
Then
\bes
&&f_{k+1}(x_{k+1})\\
&=&(k+1)s_{k+1}x_{k+1}^2
-x_{k+1}[s_{k+1}^2-1-({k+1})m_{k+1}]\\
&&-s_{k+1}m_{k+1}\\
&=& ks_kx_{k+1}^2+kx_{k+1}^3+s_{k+1}x_{k+1}^2 \\
&&- \, x_{k+1}[s_k^2+2x_{k+1}s_k-1-k(m_k-x_{k+1}^2)-m_{k+1}]\\
&&-(s_k+x_{k+1})(m_k-x_{k+1}^2)            \\
&=&ks_kx_{k+1}^2-x_{k+1}(s_k^2-1-km_k)-s_km_k\\
&& +s_{k+1}x_{k+1}^2-2s_{k+1}x_{k+1}^2
 + \, x_{k+1}(m_k-x_{k+1}^2)\\
&&-x_{k+1}m_k+s_{k+1}x_{k+1}^2+x_{k+1}^3\\
&=&ks_kx_{k+1}^2-x_{k+1}(s_k^2-1-km_k)-s_km_k\\
&=& f_{k}(x_{k+1}).
\ees

The product of the roots of the quadratic $f_k(q)$ 
equals $-s_km_k$, which is negative, so they have opposite signs.
As a result, for any positive number $\mu$, $f_k(\mu) \leq 0$ if and only if 
$\mu \leq q_k$. 
Since we chose $k$ so that $x_{k+1} \le q_{k+1}$ 
(recall that $k$ is the largest subscript so that $x_k > q_k$), we have $f_{k+1}(x_{k+1}) \le 0$. 
It follows that $f_{k}(x_{k+1}) = f_{k+1}(x_{k+1}) \le 0$ as well, i.e., $x_{k+1} \le q_k$ as desired.

\medskip
Finally, we will show that $D$ is the (unique) best approximation 
of $\ketbra{x}{x}$ in $\cI$ by establishing the following.

\medskip\noindent
{\bf Claim.} Let 
$\cF = \{F \in \cM_n: D-F \in \cI\}$.
Then $\bra{v}F\ket{v} \ge 0$ for any $F \in \cF$.

\medskip
First, note that if  $D = \diag(d_1, \dots, d_k, 0, \dots, 0)$ and if $F \in \cF$, then 
$f_{k+1}, \dots, f_n \le 0$. Hence
\begin{eqnarray*}
	\bra{v}F\ket{v} &=& \bra{v}\diag(f_1, \dots, f_n)\ket{v} \\
	&=& \sum_{j=1}^k f_j q_k^2 + \sum_{j=k+1}^n f_{j} x_j^2\\
	&=&  -\sum_{j=k+1}^n f_j q_k^2  + \sum_{j=k+1}^n f_j x_j^2 \\
	&=&  \sum_{j=k+1}^n f_j (x_j^2-q_k^2) \ \ge \  0,
\end{eqnarray*}
because we already showed that $q_k \ge x_{k+1} \ge \cdots \ge x_n$.
By Proposition~\ref{2.1}, $D$ is the best approximation element in 
$\cI_n$ of $\ketbra{x}{x}$ with respect to the operator norm and the trace norm. 
This completes the proof of the claim.

\medskip
To prove the uniqueness of $D$ and $k$, 
suppose $D_1$ is another element in $\cI$ such that
$\|\ketbra{x}{x}-D\| = \|\ketbra{x}{x}-D_1\|.$
Then
\begin{eqnarray*}
	\|\ketbra{x}{x}-D\| &=& \min_{\delta\in \cI} \|\ketbra{x}{x}-\delta\| \\
	&\le& \|\ketbra{x}{x} - (D+D_1)/2\| \\
	&\le& \|(\ketbra{x}{x} - D)/2\| + \|(\ketbra{x}{x} - D_1)/2\|.
\end{eqnarray*}
By \cite[Proposition 1.2]{CLH}, there are unitary matrices 
$V_1, V_2 \in \cM_n$ such that $V_1^\dag(\ketbra{x}{x}-D)V_2 = [\mu] \oplus Y$ and 
$$V_1^\dag(\ketbra{x}{x}-D_1)V_2 = [\mu] \oplus Z,$$
where  $Y, Z\in \cM_{n-1}$ are negative semidefinite matrices, and $\|\ketbra{x}{x} - D_1\| = \|\ketbra{x}{x}-D\| = \mu$  is the largest eigenvalue
of $\ketbra{x}{x} - D$ with eigenvector $\ket{v}$ as defined before.
Hence, if $\ket{u}$ is the first column of $V_2$ and $\ket{\tilde u}$ is the first column of $V_1$, then
$(\ketbra{x}{x}-D)\ket{u} = \mu \ket{\tilde u}$. 
It follows that  $\ket{u} = \xi \ket{v}$ for some $\xi \in \IC$ and
$\ket{\tilde u}  = \xi \ket{v}$.  Consequently, $(\ketbra{x}{x} - D_1) \ket{v} = \ket{v}$, and 
$D\ket{v} = D_1\ket{v}$ implying that $D = D_1$ as $\ket{v}$ has positive entries. This contradicts the 
assumption that $D \ne D_1$. 
By Proposition 2, we see that $D \in \cI$ attains
$\min_{\delta \in \cI}\|\ketbra{x}{x}-\delta\|$ if and only if
$D$ attains
$\min_{\delta \in \cI}\|\ketbra{x}{x}-\delta\|_{\tr}$. Thus, $D$ is the unique element in $\cI$ attaining 
$C_{\tr}(\ketbra{x}{x})$.

Because $k$ is the rank of the unique best approximation of $D$ in $\cI$ (with respect to the operator norm), we see 
that $k$ is unique, which completes the proof of the theorem. (Alternatively, if there is another $\tilde k$ satisfying (\ref{xkqk}), 
then one can use the construction in our proof to get $\tilde D$ of rank $\tilde k$ 
that best approximates $\ketbra{x}{x}$, which is a contradiction.) \qed 

Before proceeding, we note that the $k = 1$ and $k = n$ cases of Theorem~\ref{thm:pure_algorithm} actually simplify significantly:
\begin{corollary}
	Using the notation of Theorem \ref{thm:pure_algorithm}, we have the following.
	
	\begin{enumerate}
		\item The best incoherent approximation of $\ketbra{x}{x}$ is a rank one matrix, which must
		 equal $\diag(1,0,\dots,0)$, if and only if $x_1m_2 \ge 2x_2m_1$.
	
		\item The best incoherent approximation of $\ketbra{x}{x}$ is an invertible matrix, which must equal $D = \diag(d_1, \dots, d_n) \in \cI$ with 
		$$\quad d_j = \frac{1}{n} [ 1 - s_n(s_n-nx_j)] > 0 \ \ 
		\text{ for } \ \ j = 1, \dots, n,$$ if and only if $1 > s_n(s_n-nx_n)$.
	\end{enumerate}
\end{corollary}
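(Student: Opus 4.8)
The plan is to obtain both statements as direct specializations of Theorem~\ref{thm:pure_algorithm}, taking the distinguished integer $k$ to be $1$ and $n$ respectively. Since the best incoherent approximation produced by that theorem is $D=\diag(d_1,\dots,d_k,0,\dots,0)$ with $d_1,\dots,d_k>0$, its rank is exactly $k$; hence ``$D$ has rank one'' means ``$k=1$'' and ``$D$ is invertible'' means ``$k=n$''. Everything then reduces to re-expressing the defining inequality (\ref{xkqk}) and simplifying the formulas for $q_k$ and the $d_j$.

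For part (1), I would first extract an ``initial segment'' fact implicit in the proof of Theorem~\ref{thm:pure_algorithm}: combining the identity $f_\ell(x_\ell)=f_{\ell-1}(x_\ell)$ with the fact that, for $\mu>0$, one has $f_j(\mu)\le 0$ iff $\mu\le q_j$, yields $x_\ell>q_\ell \iff x_\ell>q_{\ell-1}$; since $x_{\ell-1}\ge x_\ell$ this forces the set $\{\ell:x_\ell>q_\ell\}$ to have the form $\{1,\dots,k\}$. Consequently $k=1$ exactly when $x_2\le q_2$ (the case $n=1$ being trivial, as then $\ketbra{x}{x}$ is already incoherent). Using $f_2(x_2)=f_1(x_2)$ this is equivalent to $f_1(x_2)\le 0$, and substituting $s_1=x_1$ together with $m_1=1-x_1^2=x_2^2+m_2$ into $f_1(q)=x_1q^2-q(x_1^2-1-m_1)-x_1m_1$ collapses $f_1(x_2)$ to $2x_2m_1-x_1m_2$; thus $k=1 \iff x_1m_2\ge 2x_2m_1$. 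Finally, when $k=1$ the formula of the theorem gives $d_1=(x_1-q_1)/(s_1-q_1)=1$, so $D=\diag(1,0,\dots,0)$.

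For part (2), ``$D$ invertible'' means $k=n$, which by Theorem~\ref{thm:pure_algorithm} happens exactly when $x_n>q_n$. Here $m_n=0$, so $p_n=s_n^2-1$, and since $0\le x_j\le 1$ forces $s_n=\sum_j x_j\ge\sum_j x_j^2=1$ we have $p_n\ge 0$; the square root in the expression for $q_k$ therefore simplifies and $q_n=p_n/(ns_n)=(s_n^2-1)/(ns_n)$. The inequality $x_n>q_n$ rearranges to $1>s_n(s_n-nx_n)$. A one-line computation gives $s_n-nq_n=1/s_n$, whence $d_j=(x_j-q_n)s_n=\tfrac1n\bigl[1-s_n(s_n-nx_j)\bigr]$, and since $x_j\ge x_n$ we get $d_j\ge d_n>0$ precisely under the stated inequality; conversely $1>s_n(s_n-nx_n)$ forces $x_n>q_n$, i.e.\ $k=n$.

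All the computations here are routine polynomial algebra; the one point that genuinely needs care is the ``initial segment'' observation used in part (1), namely that $k=1$ is equivalent to the single inequality $x_2\le q_2$ rather than to the full list $x_2\le q_2,\dots,x_n\le q_n$. I expect this to be the main (small) obstacle, but it follows painlessly from the monotonicity of the $x_j$ together with the equivalences already established inside the proof of Theorem~\ref{thm:pure_algorithm}.
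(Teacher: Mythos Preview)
Your proposal is correct and follows essentially the same route as the paper. The paper also reduces part~1 to ``$k=1\iff x_2\le q_2$'' and then to the sign of $f_2(x_2)$ (you instead pass through $f_1(x_2)$ via the identity $f_2(x_2)=f_1(x_2)$, which is a harmless and slightly shorter variant), and for part~2 it likewise reads off $q_n=(s_n^2-1)/(ns_n)$ and identifies the condition with $d_n>0$; your explicit justification of the ``initial segment'' fact is a point the paper leaves implicit (invoking it only in the binary-search remark after Theorem~\ref{thm:pure_algorithm}).
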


\begin{proof}
	To prove statement~1, we note that $k = 1$ if and only if $x_2 \leq q_2$. This is equivalent to
	$0 \ge f_2(x_2)$, where $f_2(q)$ is the quadratic defined in (\ref{eq3}), as shown in the 
	proof of Theorem \ref{thm:pure_algorithm}. Explicitly, we have
	\begin{eqnarray*}
		0 &\ge& 2s_2x_2^2 - x_2(s_2^2 - 1 -2m_2) - s_2m_2  \\
		&=& 2(x_1+x_2)x_2^2 - x_2[(x_1+x_2)^2-1] + 2x_2m_2 - s_2m_2 \\
		&=& x_2[2(x_1+x_2)x_2 - (x_1^2+x_2^2+2x_1x_2 - 1)] \\ &&
		 + x_2 m_2 - x_1 m_2 \\
		&=& x_2[2x_2^2  + (1 - x_1^2 - x_2^2)] + x_2(1-x_1^2-x_2^2) - x_1m_2 \\
		&=& 2x_2m_1 - x_1m_2.
	\end{eqnarray*}
	
	To prove statement~2, note that $q_n=\frac1{ns_n}(s_n^2-1)$,  
	and the stated inequality is equivalent to $d_j >  0$ 
	for all $j$, which is to say  that $D$ is positive definite. 
\end{proof}

Although Theorem~\ref{thm:pure_algorithm} appears somewhat technical at first glance, it is very simple to use both numerically and analytically. On the numerical side, it provides an extremely fast algorithm for computing $C_{\tr}(\ketbra{x}{x})$. Although it might seem somewhat time-consuming at first to find the value of $k$ described by the theorem, the proof of the theorem showed that if $q_k < x_k$ then $q_j < x_j$ for all $j < k$. Thus we can search for $k$ via binary search, which requires only $\log_2(n)$ steps, rather than searching through all $n$ possible values of $k$. MATLAB code that implements this algorithm is available for download from \cite{code}, which is able to compute $C_{\tr}(\ketbra{x}{x})$ for pure states $\ket{x} \in \mathbb{C}^{1,000,000}$ in under one second on a standard laptop computer. We contrast this with the naive semidefinite program for computing $C_{\tr}(\ketbra{x}{x})$ \cite{RPL15}, which can only reasonably handle states in $\mathbb{C}^{100}$ or so.

Theorem~\ref{thm:pure_algorithm} can also be used to analytically compute $C_{\tr}(\ketbra{x}{x})$ for arbitrary pure states as well, as we now demonstrate with some examples.

\begin{example}	
	As a simple example, consider the qutrit pure state $\ket{x} = (2/3,2/3,1/3)$, which was investigated in~\cite{RPL15}. A direct calculation reveals that
	\bes
	q_1 & = & \frac{1}{6}\big(3\sqrt{5} - 5\big) \approx 0.2847, \\
	q_2 & = & \frac{1}{48}\big(3\sqrt{17} + 5\big) \approx 0.3619, \quad \text{and} \\
	q_3 & = & \frac{16}{45} \approx 0.3556.
	\ees
	
	\noindent Thus $k = 2$ (since $q_1 < x_1$ and $q_2 < x_2$, but $q_3 \geq x_3$), which then gives $C_{\tr}(\ketbra{x}{x}) = \frac{1}{6}\big(3 + \sqrt{17}\big)$ and $D = \mathrm{diag}(1/2,1/2,0)$, verifying that the state $D$ found in~\cite{RPL15} is indeed optimal.
\end{example}

\begin{example}	
	As another example, consider an arbitrary qubit pure state $\ket{x} = (x_1,x_2) \in \mathbb{C}^2$. Then
	\bes
	q_2 = \frac{|x_1x_2|}{|x_1|+|x_2|} \leq \mathrm{min}\{|x_1|,|x_2|\},
	\ees
	with equality if and only if either $x_1 = 0$ or $x_2 = 0$. If $x_1,x_2 \neq 0$ then $k = 2$ and we then have $C_{\tr}(\ketbra{x}{x}) = 2|x_1x_2|$ and $D = \mathrm{diag}(\ketbra{x}{x})$, which agrees with the formula for qubit states found in~\cite{RPL15}. If $x_1 = 0$ or $x_2 = 0$ then $k = 1$ and it is straightforward to check that we get the same formula.
\end{example}

\section{IV. Maximally coherent states under the trace norm of coherence}

We recall~\cite{BCP14} that a pure state $\ket{x} \in \mathbb{C}^n$ is called \emph{maximally coherent} if all of its entries have equal absolute value: $|x_1| = \cdots = |x_n| = 1/\sqrt{n}$. Recently it has been suggested that the maximum value of a proper measure of coherence should be attained exactly by the maximally coherent states \cite{PJF15}, and this property is known to hold for the relative entropy of coherence (this is straightforward to prove, see~\cite{BCP14} for example), the $\ell_1$-norm of coherence~\cite[Theorem 2]{SBDP15}, and the robustness of coherence~\cite{NBCPJA16}. We now show that this same property also holds for the trace distance of coherence, which provides further evidence that it is indeed a proper measure of coherence.

\begin{theorem}\label{thm:maxcoherent}
	For all (potentially mixed) states $\rho \in \cD_n$, we have $C_{\tr}(\rho) \leq 2 - 2/n$. 
	Furthermore, equality holds if and only if $\rho = \ketbra{x}{x}$, where $\ket{x}$ is a maximally coherent state.
\end{theorem}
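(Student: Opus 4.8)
The plan is to reduce the mixed-state case to the pure-state case, and then to apply Theorem~\ref{thm:pure_algorithm}. For the reduction, observe that $C_{\tr}$ is convex: for fixed $\delta \in \cI$ the map $\rho \mapsto \|\rho - \delta\|_{\tr}$ is convex, and the minimum of a family of convex functions over a convex parameter set need not be convex, but here incoherence is preserved under convex combinations, so one can argue directly that if $\rho = \sum_i p_i \ketbra{x_i}{x_i}$ and $\delta_i \in \cI$ is optimal for $\ketbra{x_i}{x_i}$, then $\sum_i p_i \delta_i \in \cI$ and $\|\rho - \sum_i p_i\delta_i\|_{\tr} \le \sum_i p_i \|\ketbra{x_i}{x_i} - \delta_i\|_{\tr} = \sum_i p_i C_{\tr}(\ketbra{x_i}{x_i})$. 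Hence $C_{\tr}(\rho) \le \max_i C_{\tr}(\ketbra{x_i}{x_i})$, so it suffices to prove $C_{\tr}(\ketbra{x}{x}) \le 2 - 2/n$ for all pure $\ket{x}$, with equality iff $\ket{x}$ is maximally coherent; the equality clause for mixed $\rho$ then follows because equality forces every $\ket{x_i}$ appearing in a pure-state decomposition to be maximally coherent, and a mixture of maximally coherent pure states equal to a pure state must itself be a single maximally coherent pure state (here one uses that $C_{\tr}(\rho)=2-2/n$ with $\rho$ a proper mixture would require the inequality $C_{\tr}(\rho)\le\max_i C_{\tr}(\ketbra{x_i}{x_i})$ to be tight in a way that, combined with strict convexity of the trace norm on the relevant directions, collapses the mixture).

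For the pure-state bound, by the WLOG reduction preceding Theorem~\ref{thm:pure_algorithm} take $\ket{x} = (x_1,\dots,x_n)^t$ with $x_1 \ge \cdots \ge x_n \ge 0$. By Theorem~\ref{thm:pure_algorithm}, $C_{\tr}(\ketbra{x}{x}) = 2(q_k s_k + m_k) = 2\mu$ where $\mu = q_k s_k + m_k = \|\ketbra{x}{x} - D\|$ is the top eigenvalue of $\ketbra{x}{x} - D$. So I need $\mu \le 1 - 1/n$. The cleanest route: $\mu$ is the largest eigenvalue of $\ketbra{x}{x} - D$, and since $D \in \cI \subseteq \cD_n$ with $D \succeq 0$, Weyl's inequality gives $\mu = \lambda_1(\ketbra{x}{x} - D) \le \lambda_1(\ketbra{x}{x}) + \lambda_1(-D) = 1 - \lambda_n(D)$. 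This only gives $\mu \le 1$, which is not sharp enough, so instead I would work with the explicit eigenvector $\ket{v} = (q_k,\dots,q_k,x_{k+1},\dots,x_n)^t$ and the variational characterization $\mu = \bra{v}(\ketbra{x}{x}-D)\ket{v}/\langle v|v\rangle$, or better, compare $\ketbra{x}{x} - D$ against a fixed incoherent target. Specifically, I would bound $C_{\tr}(\ketbra{x}{x}) = \min_{\delta\in\cI}\|\ketbra{x}{x}-\delta\|_{\tr} \le \|\ketbra{x}{x} - I/n\|_{\tr}$ if $I/n$ were the relevant comparison — but $I/n$ may not be the minimizer, yet it still gives an upper bound. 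One computes $\|\ketbra{x}{x} - I/n\|_{\tr}$ from the spectrum of $\ketbra{x}{x} - I/n$, whose eigenvalues are $1 - 1/n$ (once, along $\ket{x}$) and $-1/n$ ($n-1$ times), giving $\|\ketbra{x}{x} - I/n\|_{\tr} = (1-1/n) + (n-1)/n = 2 - 2/n$. Hence $C_{\tr}(\ketbra{x}{x}) \le 2 - 2/n$ for every pure state, with no case analysis needed.

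For the equality characterization, suppose $C_{\tr}(\ketbra{x}{x}) = 2 - 2/n$. Since $I/n \in \cI$ achieves this value, $I/n$ is an optimal incoherent approximation, and by the uniqueness clause of Theorem~\ref{thm:pure_algorithm} it is \emph{the} optimal one, so $D = I/n$; in particular $D$ has rank $n$, i.e. $k = n$, and $d_j = 1/n$ for all $j$. From the formula $d_j = (x_j - q_n)/(s_n - n q_n)$ being constant in $j$ we get $x_1 = \cdots = x_n$, which together with normalization forces $x_j = 1/\sqrt n$; that is, $\ket{x}$ is maximally coherent. Conversely, if $\ket{x}$ is maximally coherent then by symmetry (or by plugging $x_j = 1/\sqrt n$ into the theorem, where $m_n = 0$, $s_n = \sqrt n$, $q_n = (s_n^2-1)/(n s_n) = (n-1)/(n\sqrt n)$, so $C_{\tr} = 2 q_n s_n = 2(n-1)/n = 2 - 2/n$) the value is attained. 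The main obstacle I anticipate is the strict-inequality/equality bookkeeping in the mixed-state case: making rigorous that equality in $C_{\tr}(\rho) \le \max_i C_{\tr}(\ketbra{x_i}{x_i})$ together with the pure-state equality case forces $\rho$ itself to be pure; the convexity inequality alone allows $\rho$ to be a genuine mixture of distinct maximally coherent pure states, so one must additionally show that such a mixture has $C_{\tr}$ strictly below $2-2/n$ unless it degenerates, presumably by exhibiting, for a proper mixture, an explicit incoherent $\delta$ (again a candidate near $I/n$ but perturbed) that beats $2 - 2/n$, or by invoking strict convexity of $\|\cdot\|_{\tr}$ transverse to the shared eigendirection.
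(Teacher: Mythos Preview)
Your upper bound and the pure-state equality analysis are both correct and essentially match the paper: bounding $C_{\tr}(\ketbra{x}{x}) \le \|\ketbra{x}{x} - I/n\|_{\tr} = 2-2/n$ via the explicit spectrum, and then using the uniqueness clause of Theorem~\ref{thm:pure_algorithm} to force $D=I/n$ and hence $x_1=\cdots=x_n$, is exactly what the paper does (it phrases the last step via Corollary~1(2), but the content is identical).

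The genuine gap is precisely the one you flag: your convexity reduction gives $C_{\tr}(\rho)\le\sum_i p_i\,C_{\tr}(\ketbra{x_i}{x_i})$, and equality at $2-2/n$ forces each $\ket{x_i}$ to be maximally coherent, but it does \emph{not} force $\rho$ to be pure. A proper mixture of maximally coherent pure states could, as far as this inequality alone is concerned, still satisfy $C_{\tr}(\rho)=2-2/n$; ruling that out via ``strict convexity transverse to the shared eigendirection'' is vague and would take real work, because distinct maximally coherent states do not share a common top eigenvector for $\ketbra{x_i}{x_i}-\delta$.

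The paper sidesteps this entirely with a cleaner argument that you almost wrote down but applied only to pure states: compute $\|\rho - I/n\|_{\tr}$ directly from the eigenvalues $p_1\ge\cdots\ge p_n$ of $\rho$. Since $I/n$ commutes with $\rho$, the eigenvalues of $\rho-I/n$ are $p_j-1/n$, so with $k$ the number of eigenvalues exceeding $1/n$,
\[
\|\rho - I/n\|_{\tr} \;=\; \sum_{j=1}^n |p_j-1/n| \;=\; 2\sum_{j=1}^k (p_j-1/n) \;\le\; 2\Bigl(\sum_{j=1}^k p_j - k/n\Bigr) \;\le\; 2(1-k/n) \;\le\; 2(1-1/n).
\]
Equality in the last two steps forces $k=1$ and $p_1=1$, i.e.\ $\rho$ is rank one. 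Thus $C_{\tr}(\rho)=2-2/n$ implies $\|\rho-I/n\|_{\tr}=2-2/n$ (the reverse inequality is just $C_{\tr}(\rho)\le\|\rho-I/n\|_{\tr}$), which forces $\rho$ pure, and then your pure-state argument finishes. This eliminates the convexity detour and closes the gap in one line.
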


We note that, while the upper bound in \ref{thm:maxcoherent} is well-known (see \cite[Theorem 2.1]{BDC}), the ``iff'' statement for equality was not. 

\begin{proof}
Let $\rho$ be a general mixed state with spectral decomposition 
$\sum_{j=1}^n p_j \ketbra{x_j}{x_j}$ such that 
$p_1 \ge \cdots \ge p_k > 1/n \ge p_{k+1} \ge \cdots \ge p_n$.
Then 
\begin{eqnarray*}
	\min_{\delta\in \cI} \|\rho - \delta\|_{\tr}  
	&\le&  \|\rho - I/n\|_{\tr} \\  
	&=& \sum_{j=1}^k (p_j - 1/n) +  \sum_{j=k+1}^n  (1/n-p_j)   \\
	&=& 2\sum_{j=1}^k (p_j - 1/n)   \\ 
	&\le& 2(1 - k/n)\\
	&\le& 2(1-1/n),
\end{eqnarray*}
where the second equality holds because $\tr(\rho - I/n) = 0$. If the equality 
$\min_{\delta\in \cI}\|\rho-\delta\|_{\tr} = 2(1-1/n)$ holds, then 
$k = 1$ so that $\rho =  \ketbra{x}{x}$ has rank one,
and $D = I/n$ satisfies
$C_{\tr}( \ketbra{x}{x}) = \| \ketbra{x}{x}-D\|_{\tr}$.
We may replace $\ket{x}$ by $PU\ket{x}$ as in 
Section~II and so we assume without loss of generality that 
$\ket{x} = (x_1, \dots, x_n)^t$ 
with $x_1 \ge \cdots \ge x_n \ge 0$. 
By Corollary 1 (2), we see that $d_1 = \cdots = d_n$ so that 
$s_n - nx_1 = \cdots = s_n - nx_n$. Thus, $x_1 = \cdots = x_n$.
The desired conclusion follows. \end{proof}

\section{V. Coherence and Entanglement Measures}

In this section, we show that a wide variety of measures of coherence coincide exactly with an analogous measure of entanglement when restricted to pure states. The motivating example for this result is that it (in conjunction with Theorem~\ref{thm:pure_algorithm}) gives an ``almost-formula'' on pure states for what we call the \emph{trace distance of entanglement}:
\[
E_{\tr}(\rho) := \min_{\sigma\in \mathcal{S}}\|\rho-\sigma\|_{\tr},
\]
where $\mathcal{S}$ is the set of separable states in a bipartite Hilbert space.

Throughout this section, we suppose without loss of generality that the state $\ket{v} \in \mathbb{C}^n \otimes \mathbb{C}^n$ has Schmidt decomposition $\ket{v} = \sum_{j=1}^n \lambda_j \ket{j}\otimes\ket{j}$, which is justified by multiplying $\ket{v}$ by some local unitaries and noting that all quantities we consider are invariant under local unitaries.
\begin{theorem}\label{thm:max_correlated_entanglement}
	Let $\ket{v} \in \mathbb{C}^n \otimes \mathbb{C}^n$ be a pure state with Schmidt coefficients $\lambda_1, \lambda_2, \ldots, \lambda_n$, and define $\ket{\lambda} := (\lambda_1,\lambda_2,\ldots,\lambda_n)$ to be the vector containing those Schmidt coefficients. Then $E_{\tr}(\ketbra{v}{v}) = C_{\tr}(\ketbra{\lambda}{\lambda})$.
\end{theorem}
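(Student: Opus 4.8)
The plan is to establish the equality $E_{\tr}(\ketbra{v}{v}) = C_{\tr}(\ketbra{\lambda}{\lambda})$ by two inequalities. For the direction $E_{\tr}(\ketbra{v}{v}) \le C_{\tr}(\ketbra{\lambda}{\lambda})$, I would take the optimal incoherent state $D = \diag(d_1,\dots,d_n)$ for $\ket{\lambda}$ from Theorem~\ref{thm:pure_algorithm} and build the separable state $\sigma := \sum_{j=1}^n d_j \ketbra{j}{j} \otimes \ketbra{j}{j}$, which is manifestly separable (indeed classically correlated). The key computation is that $\ketbra{v}{v} - \sigma$, written in the product basis $\{\ket{i}\otimes\ket{j}\}$, is supported on the ``diagonal'' subspace spanned by $\{\ket{j}\otimes\ket{j}\}_{j=1}^n$, and on that $n$-dimensional subspace it is unitarily equivalent to $\ketbra{\lambda}{\lambda} - D$ as an $n\times n$ matrix; hence $\|\ketbra{v}{v} - \sigma\|_{\tr} = \|\ketbra{\lambda}{\lambda} - D\|_{\tr} = C_{\tr}(\ketbra{\lambda}{\lambda})$, giving the bound.

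For the reverse direction $E_{\tr}(\ketbra{v}{v}) \ge C_{\tr}(\ketbra{\lambda}{\lambda})$, I would use the dual/optimality characterization. Proposition~\ref{1.1} (applied to the trace norm with $W = \mathcal{S}$) says the optimal $\sigma$ is certified by a Hermitian contraction $H$ with $\tr((\ketbra{v}{v}-\sigma)H) = \|\ketbra{v}{v}-\sigma\|_{\tr}$ and $\tr(HF)\ge 0$ for all directions $F$ with $\sigma - F \in \mathcal{S}$. Alternatively, and more cleanly, I would exhibit a \emph{witness} $W_0$ (a Hermitian operator with $-I \le W_0 \le I$ and $\tr(W_0 \tau) \ge 0$ for all separable $\tau$) achieving $\tr(W_0 \ketbra{v}{v}) = C_{\tr}(\ketbra{\lambda}{\lambda})$: since any separable $\sigma$ satisfies $\|\ketbra{v}{v} - \sigma\|_{\tr} \ge \tr(W_0(\ketbra{v}{v} - \sigma)) \ge \tr(W_0 \ketbra{v}{v})$, this lower-bounds $E_{\tr}$. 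The natural candidate is $W_0 := (U\otimes U)(H_0 \oplus (\text{something on the orthocomplement}))(U\otimes U)^*$ where $H_0 = \ketbra{v_\lambda}{v_\lambda}$ (normalized) is the top-eigenprojection certificate for the coherence problem from Proposition~\ref{2.1}, suitably embedded so that $W_0$ is positive on product states; one checks $\tr(W_0\ketbra{v}{v}) = 2\lambda_1(\ketbra{\lambda}{\lambda} - D)$ and verifies the separability-positivity condition using the structure of $\ket{v_\lambda}$ having nonnegative entries (so the embedded operator is entrywise controllable on the product basis).

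The main obstacle I anticipate is the second direction: constructing a valid entanglement witness $W_0$ on the full $\mathbb{C}^n \otimes \mathbb{C}^n$ that both (i) reproduces the coherence-optimality value on $\ketbra{v}{v}$ and (ii) is provably nonnegative on \emph{all} separable states, not merely classically-correlated ones. Condition (ii) is strictly stronger than positivity on the diagonal subspace, so the naive embedding ``$H_0$ on the diagonal block, $0$ elsewhere'' will not be positive on generic product states and must be corrected by adding a positive operator supported on the orthocomplement (paying attention that adding such a correction does not increase the operator norm beyond $1$, which constrains how large the correction can be). I would resolve this by using the explicit form of the optimal $D$ and $\ket{v_\lambda}$ from Theorem~\ref{thm:pure_algorithm}: because $\ket{v_\lambda} = (q_k,\dots,q_k,\lambda_{k+1},\dots,\lambda_n)^t$ has all entries nonnegative and the relevant block structure is controlled, the correction term can be taken diagonal in the product basis, and positivity on product states reduces to a finite collection of scalar inequalities of the same flavor as the Claim in the proof of Theorem~\ref{thm:pure_algorithm}. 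Once $W_0$ is in hand, both inequalities close and the theorem follows.
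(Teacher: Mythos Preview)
Your first inequality $E_{\tr}(\ketbra{v}{v}) \le C_{\tr}(\ketbra{\lambda}{\lambda})$ is exactly the paper's argument: embed the optimal incoherent $D$ as the classically correlated state $\sigma = \sum_j d_j \ketbra{j}{j}\otimes\ketbra{j}{j}$ and observe that $\ketbra{v}{v}-\sigma$ is, up to rows and columns of zeros, the matrix $\ketbra{\lambda}{\lambda}-D$.

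For the reverse inequality you take a genuinely different route from the paper, and this is where the proposal has a real gap. The paper does \emph{not} build a dual witness on $\mathbb{C}^n\otimes\mathbb{C}^n$. Instead it constructs, for any real PPT state $\sigma$, a quantum channel $\Phi:\cM_n\otimes\cM_n\to\cM_n$ (Lemma~\ref{lem:ppt_incoherent}) with $\Phi(\ketbra{v}{v})=\ketbra{\lambda}{\lambda}$ and $\Phi(\sigma)\in\cI$; contractivity of the trace norm under channels then gives $C_{\tr}(\ketbra{\lambda}{\lambda})\le\|\Phi(\ketbra{v}{v}-\sigma^*)\|_{\tr}\le\|\ketbra{v}{v}-\sigma^*\|_{\tr}=E_{\tr}(\ketbra{v}{v})$. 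The work is in designing $\Phi$: a diagonal twirl $\Psi$ followed by a $\sigma$-dependent channel $\Omega_\sigma$ whose Kraus operators are calibrated using the inequality $|\sigma_{ij,ij}|\le\tfrac12(\sigma_{ii,jj}+\sigma_{jj,ii})$, which holds precisely because $\sigma$ is PPT. This channel argument is what makes the problem tractable; it never needs to certify nonnegativity on \emph{all} product states.

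Your witness plan, by contrast, does need that certification, and the sketch does not deliver it. Three concrete problems: (i) your sign convention is inconsistent (you ask for $\tr(W_0\tau)\ge 0$ on separable $\tau$ but then use $\tr(W_0(\ketbra{v}{v}-\sigma))\ge\tr(W_0\ketbra{v}{v})$, which needs $\tr(W_0\sigma)\le 0$); (ii) the rank-one projection $H_0=\ketbra{v_\lambda}{v_\lambda}$ gives $\tr(H_0(\ketbra{\lambda}{\lambda}-D))=\lambda_1$, not $2\lambda_1=\|\ketbra{\lambda}{\lambda}-D\|_{\tr}$, so already in the coherence problem the correct trace-norm certificate is $P_+-P_-$ (signature of $\ketbra{\lambda}{\lambda}-D$), not the top eigenprojection alone; and (iii) most importantly, the step ``positivity on product states reduces to a finite collection of scalar inequalities'' is exactly the hard part of entanglement-witness design and is not true for a generic diagonal correction. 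For a product state $\ket{a}\otimes\ket{b}$ with genuine superpositions, the contribution of your embedded $2\ketbra{\tilde v}{\tilde v}$ is $2\big|\sum_j v_j a_j b_j\big|^2$, and dominating this by a diagonal term $\sum_{i,j}p_{ij}|a_i|^2|b_j|^2$ subject to $\|W_0\|\le 1$ is a block-positivity condition over a continuum, not a finite check. Nothing in the Claim of Theorem~\ref{thm:pure_algorithm} (which only handles diagonal perturbations $F\in\cF$) supplies this. So as written the reverse inequality is not proved; to close it along your lines you would essentially need an argument equivalent in strength to Lemma~\ref{lem:ppt_incoherent}.
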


We note that Theorem~\ref{thm:max_correlated_entanglement} is rather remarkable for the fact that it shows that computing $E_{\tr}(\ketbra{v}{v})$ is roughly as difficult as computing the Schmidt coefficients of $\ket{v}$ (and in particular, is thus computable in polynomial time). This was not obvious a priori, as optimizations over the set of separable states are typically NP-hard \cite{Gur03}, and in practice they are usually approximated by semidefinite programs that make use of symmetric extensions \cite{DPS04}.

Before proving this theorem, we present the lemma that is at its heart and does most of the heavy lifting.

\begin{lemma}\label{lem:ppt_incoherent}
	Let $\sigma \in \mathcal{M}_n \otimes \mathcal{M}_n$ be a real-valued state with positive partial transpose. Then there exists a quantum channel $\Phi : \mathcal{M}_n \otimes \mathcal{M}_n \rightarrow \mathcal{M}_n$ (which depends on $\sigma$) such that $\Phi(\sigma)$ is incoherent (i.e., diagonal), and $\Phi(\ketbra{v}{v}) = \ketbra{\lambda}{\lambda}$ for all pure states $\ket{v} \in \mathbb{C}^n \otimes \mathbb{C}^n$ of the form $\ket{v} = \sum_{j=1}^n\lambda_j\ket{j}\otimes\ket{j}$ (and $\ket{\lambda}$ is as defined in Theorem~\ref{thm:max_correlated_entanglement}).
\end{lemma}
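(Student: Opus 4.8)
The plan is to construct the channel $\Phi$ explicitly as a composition of two operations: first a ``coherification'' map that uses the real structure and PPT property of $\sigma$, and then a local projection or measurement onto the block-diagonal parts indexed by the Schmidt basis. The key observation to exploit is that for a pure state $\ket{v} = \sum_j \lambda_j \ket{j}\otimes\ket{j}$, the ``correlated'' blocks $\ketbra{i}{i}\otimes\ketbra{j}{j}$-structure of $\ketbra{v}{v}$ records exactly the products $\lambda_i\lambda_j$, which are the entries of $\ketbra{\lambda}{\lambda}$. So the natural candidate is to design $\Phi$ so that it reads off the $(i,i),(j,j)$-indexed entries of its input (in the tensor basis) and places them in position $(i,j)$ of the output; on $\ketbra{v}{v}$ this manifestly yields $\ketbra{\lambda}{\lambda}$.

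Concretely, I would first apply to $\sigma$ the completely dephasing channel that kills all entries outside the span of $\{\ketbra{ij}{k\ell} : i+\ell = j+k\}$ — or more simply, define an intermediate map $\Psi(X)$ whose $(i,j)$ entry is $\langle ii | X | jj\rangle$, i.e. $\Psi(X) = \sum_{i,j} \langle ii|X|jj\rangle \, \ketbra{i}{j}$. On $\ketbra{v}{v}$ this gives exactly $\ketbra{\lambda}{\lambda}$. The first main step is then to verify that $\Psi(\sigma)$ is a valid (subnormalized, then normalized) density matrix — here positivity is where the PPT hypothesis enters: the matrix $[\langle ii|\sigma|jj\rangle]_{i,j}$ is a principal-type submatrix of a suitable reshuffling of $\sigma$, and one shows it is PSD by combining positivity of $\sigma$ with positivity of $\sigma^{\Gamma}$ (the partial transpose), using the fact that $\sigma$ is real-valued so that transposes behave symmetrically. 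The second main step is to check that $\Psi(\sigma)$ is diagonal, i.e. that the off-diagonal entries $\langle ii|\sigma|jj\rangle$ vanish for $i\ne j$ — this is the place one must actually use something beyond mere positivity, and I expect it follows from PPT together with the real-valued hypothesis by an argument showing these ``doubly-off-diagonal coherences'' must be zero (a $2\times 2$ positivity/PPT argument on the relevant $\{ii, jj, ij, ji\}$ submatrix, noting that the $ij, ji$ positions are forced to be small or zero). The third, routine, step is to extend $\Psi$ (currently only trace-nonincreasing and not obviously completely positive as written) to an honest quantum channel $\Phi$ by completing it with a complementary term, e.g. adding $(1 - \tr\Psi(\sigma))\,\rho_0$ for a fixed incoherent $\rho_0$, or realizing $\Psi$ via a concrete Kraus/measure-and-prepare form; one must check this completion preserves the two required properties ($\Phi(\sigma)$ incoherent, $\Phi(\ketbra{v}{v}) = \ketbra{\lambda}{\lambda}$), which it does since $\ketbra{v}{v}$ already has $\tr\Psi(\ketbra{v}{v}) = \sum_j \lambda_j^2 = 1$.

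The main obstacle will be Step 2 — showing $\Psi(\sigma)$ is actually \emph{diagonal}, not merely PSD. Positivity of $\sigma$ alone certainly does not force $\langle ii|\sigma|jj\rangle = 0$; the PPT condition and reality must be used in an essential way. I anticipate the right move is to consider, for each pair $i\ne j$, the $4\times 4$ compression of $\sigma$ to $\span\{\ket{ii},\ket{jj},\ket{ij},\ket{ji}\}$ and its partial transpose, and to argue that PPT forces the relevant entries to vanish (or that the only way the channel can simultaneously act correctly on all product-form inputs is if those coherences were zero to begin with). Once the structure of $\Phi$ is pinned down, the verification that $\Phi(\ketbra{v}{v}) = \ketbra{\lambda}{\lambda}$ is immediate from the definition, and Theorem~\ref{thm:max_correlated_entanglement} will follow by the standard monotonicity argument: $E_{\tr}(\ketbra{v}{v}) = \min_{\sigma\in\mathcal{S}}\|\ketbra{v}{v}-\sigma\|_{\tr} \ge \min \|\Phi(\ketbra{v}{v})-\Phi(\sigma)\|_{\tr} = \min_{\delta\in\mathcal{I}}\|\ketbra{\lambda}{\lambda}-\delta\|_{\tr} = C_{\tr}(\ketbra{\lambda}{\lambda})$, with the reverse inequality coming from embedding an optimal incoherent $\delta$ as a separable state correlated in the Schmidt basis.
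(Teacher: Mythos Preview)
Your Step~2 is where the argument breaks: it is \emph{not} true that a real PPT state $\sigma$ must have $\langle ii|\sigma|jj\rangle = 0$ for $i\ne j$. A single product state already refutes it: with $n=2$ and $\ket{+}=\tfrac{1}{\sqrt{2}}(\ket{1}+\ket{2})$, the state $\sigma=\ketbra{+}{+}\otimes\ketbra{+}{+}$ is real, separable (hence PPT), and $\langle 11|\sigma|22\rangle = 1/4$. So your map $\Psi(X)=\sum_{i,j}\langle ii|X|jj\rangle\,\ketbra{i}{j}$ (which, incidentally, \emph{is} completely positive, since $\Psi(X)=V^*XV$ with $V=\sum_j\ket{jj}\bra{j}$) will in general send $\sigma$ to a non-diagonal matrix. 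No amount of $4\times4$-block positivity juggling will force those entries to vanish, because they simply don't. The completion you propose---adding $(1-\tr\Psi(\sigma))\rho_0$---only fixes the trace, not the off-diagonals.

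The paper's construction exploits the $\sigma$-dependence of $\Phi$ in a much more essential way. It first applies a diagonal twirl $\rho\mapsto\int (U\otimes\overline U)\rho(U\otimes\overline U)^\dagger\,dU$ over diagonal unitaries, which preserves PPT and kills all matrix entries except the $\langle ii|\sigma|jj\rangle$ and $\langle ij|\sigma|ij\rangle$. Then it applies a second channel $\Omega_\sigma$ whose Kraus operators include your $V^*=\sum_j\ket{j}\bra{jj}$ (producing exactly your $\Psi(\sigma)$, off-diagonals and all) \emph{together with} additional Kraus operators supported on the $\ket{ij}$ sectors, $i\ne j$, whose coefficients are tuned to $\sigma$ so as to cancel the off-diagonal contributions $\langle ii|\sigma|jj\rangle$. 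The PPT hypothesis is used precisely here: it gives the bound $|\langle ii|\sigma|jj\rangle|\le\tfrac12(\langle ij|\sigma|ij\rangle+\langle ji|\sigma|ji\rangle)$, which is exactly what guarantees that the required cancellation coefficients lie in $[0,1]$ and hence that $\Omega_\sigma$ is a genuine trace-preserving channel. On $\ketbra{v}{v}$ the extra Kraus operators contribute nothing (since $\ketbra{v}{v}$ has no weight on $\ket{ij}$ for $i\ne j$), so $\Phi(\ketbra{v}{v})=\ketbra{\lambda}{\lambda}$ still holds.
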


\begin{proof}
	The channel $\Phi$ that works will be constructed as the composition of two simpler channels. To begin, we consider the diagonal twirling channel $\Psi : \mathcal{M}_n \otimes \mathcal{M}_n \rightarrow \mathcal{M}_n \otimes \mathcal{M}_n$ defined by
	$$
	\Psi(\rho) = \int_{U \in \mathcal{D}(\mathcal{U})} (U \otimes \overline{U}) \rho (U \otimes \overline{U})^\dagger \, dU,
	$$
	where $\mathcal{D}(\mathcal{U})$ is the set of diagonal unitary matrices, and we integrate with respect to the usual Haar measure. Then if $\rho_{ij,k\ell}$ denotes the coefficient of the basis matrix $\ketbra{i}{j}\otimes\ketbra{k}{\ell}$ in a density matrix $\rho$, we have the following:
	
	\noindent \textbf{Claim 1:} $$\Psi(\rho) = \sum_{i,j=1}^n \rho_{ii,jj}\ketbra{i}{i} \otimes \ketbra{j}{j} + \sum_{i\neq j=1}^n \rho_{ij,ij}\ketbra{i}{j} \otimes \ketbra{i}{j},$$
	
	\noindent and
	
	\noindent \textbf{Claim 2:} If $\sigma$ has positive partial transpose then so does $\Psi(\sigma)$.
	
	Claim~2 follows simply from the fact that conjugation by each $U \otimes \overline{U}$ does not change whether or not a state has positive partial transpose, so integrating over these states also gives a PPT state by convexity. 
	
	To see why Claim~1 holds, we explicitly compute the coefficient $\psi_{ij,k\ell}$ of $\ketbra{i}{j} \otimes \ketbra{k}{\ell}$ in $\Psi(\rho)$:
	\begin{eqnarray*}
		\psi_{ij,k\ell} &=& \iiiint_{U(1)} z_iz_\ell\overline{z_jz_k}\rho_{ij,k\ell} \, dz_i \, dz_j \, dz_k \, dz_k \\
		& = & \begin{cases}
			\rho_{ij,k\ell}, \quad \text{if} \quad (i,\ell) = (j,k) \\
			\rho_{ij,k\ell}, \quad \text{if} \quad (i,\ell) = (k,j) \\
			0, \quad \text{otherwise}
		\end{cases},
	\end{eqnarray*}
	where the first two cases follow simply from the fact that $z_iz_\ell\overline{z_jz_k} = |z_i|^2|z_\ell|^2 = 1$ if $(i,\ell) = (j,k)$ or $(i,\ell) = (k,j)$, and the third case follows from invariance of the Haar measure and the fact that $\int_{U(1)} z_i \, dz_i = 0$.
	
	Before proceeding, we note that Claim~1 implies in particular that $\Psi(\ketbra{v}{v}) = \ketbra{v}{v}$ (recall that we are assuming that $\ket{v} = \sum_{j=1}^n \lambda_j \ket{j}\otimes\ket{j}$), and Claim~2 implies that the matrix $$\sum_{i,j=1}^n \sigma_{ii,jj}\ketbra{i}{i} \otimes \ketbra{j}{j} + \sum_{i\neq j=1}^n \sigma_{ij,ij}\ketbra{i}{j} \otimes \ketbra{i}{j}$$ has positive partial transpose. By computing the partial transpose of this matrix, we thus see that every $2 \times 2$ matrix of the form $$\begin{bmatrix}\sigma_{ii,jj} & \sigma_{ij,ij} \\ \sigma_{ji,ji} & \sigma_{jj,ii}\end{bmatrix}$$ must be positive semidefinite. We thus conclude that $$|\sigma_{ij,ij}| \leq \sqrt{\sigma_{ii,jj}\sigma_{jj,ii}} \leq \frac{1}{2}(\sigma_{ii,jj} + \sigma_{jj,ii})$$ for all $i \neq j$. We define $c := \sqrt{\frac{2|\sigma_{ij,ij}|}{\sigma_{ii,jj} + \sigma_{jj,ii}}}$, which is thus a real number between $0$ and $1$, and $s := \sigma_{ij,ij}/|\sigma_{ij,ij}|$, which is the sign of $\sigma_{ij,ij}$ (recall that we are assuming $\sigma$ is real-valued, so $\sigma_{ij,ij}$ is a real number).
	
	Now that we have established all of the properties of $\Psi$ that we need, we introduce one more channel that will be used to finish the proof. This channel, which we denote by $\Omega_\sigma : \mathcal{M}_n \otimes \mathcal{M}_n \rightarrow \mathcal{M}_n$, depends on $\sigma$ and is defined via the following set of $1 + 2n(n-1)$ Kraus operators:
	\begin{eqnarray*}
		E_+ & := & \sum_{j=1}^n \ket{j}(\bra{j} \otimes \bra{j}) \\
		E_{ij} & := & \frac{c}{\sqrt{2}}(\ket{i} - s\ket{j})(\bra{i} \otimes \bra{j}) \quad \text{ for all } 1 \leq i \neq j \leq n \\
		F_{ij} & := & \sqrt{1-c^2}\ket{i}(\bra{i} \otimes \bra{j}) \quad \text{ for all } 1 \leq i \neq j \leq n.
	\end{eqnarray*}
	
	To see that $\Omega_\sigma(\ketbra{v}{v}) = \ketbra{\lambda}{\lambda}$, we compute
	\begin{eqnarray*}
		&& \Omega_\sigma(\ketbra{v}{v}) \\ 
		& = & E_+\ketbra{v}{v} E_+^\dagger + 
		\sum_{i\neq j} E_{ij}\ketbra{v}{v} E_{ij}^\dagger 
		+ \sum_{i\neq j} F_{ij}\ketbra{v}{v} F_{ij}^\dagger \\
		& = & \sum_{i,j=1}^n \lambda_i\lambda_j\ketbra{i}{j} + 0 + 0 \\
		& = & \ketbra{\lambda}{\lambda}.
	\end{eqnarray*}
	
	To see that $\Omega_\sigma(\Psi(\sigma))$ is incoherent, we verify that
	\begin{eqnarray*}
	&&	\Omega_\sigma(\Psi(\sigma)) \\ 
		& = & E_+\Psi(\sigma) E_+^\dagger + \sum_{i\neq j} E_{ij}\Psi(\sigma) E_{ij}^\dagger 
		 + \sum_{i\neq j} F_{ij}\Psi(\sigma) F_{ij}^\dagger \\
		& = & \sum_{i,j=1}^n \sigma_{ij,ij}\ketbra{i}{j} 
		 + (1-c^2)\sum_{i\neq j=1}^n \sigma_{ii,jj}\ketbra{i}{i} \\&& 
		+ \frac{c^2}{2}\sum_{i\neq j=1}^n \sigma_{ii,jj}(\ket{i} - s\ket{j})(\bra{i} - s\bra{j}) 
		\\
		& = & \sum_{i=1}^n \left(\sigma_{ii,ii} 
		+ (c^2 + (1-c^2))\sum_{j\neq i} \sigma_{ii,jj}\right)\ketbra{i}{i} \\
		&& + \sum_{i=1 \neq j}^n \left(\sigma_{ij,ij} - \frac{sc^2}{2}(\sigma_{ii,jj} + \sigma_{jj,ii})\right)\ketbra{i}{j}, \\
		& = & \sum_{i=1}^n \left(\sum_{j=1}^n \sigma_{ii,jj}\right)\ketbra{i}{i},
	\end{eqnarray*}
	which is incoherent.
	
	Finally, we must verify that $\Omega_\sigma$ is a quantum channel. It is completely positive by construction (any map defined in terms of Kraus operators is), so we just need to verify that it is trace-preserving (i.e., the fact that $\Omega_\sigma^\dagger(I) = I$). To this end, we compute
	\begin{eqnarray*}
		\Omega_\sigma^\dagger(I) & = & E_+^\dagger E_+ + \sum_{i\neq j} E_{ij}^\dagger E_{ij} + \sum_{i\neq j} F_{ij}^\dagger F_{ij} \\
		&=& \sum_{i=1}^n \ketbra{i}{i} \otimes \ketbra{i}{i} \\
		&& + (c^2 + (1-c^2))\sum_{i\neq j=1}^n \ketbra{i}{i} \otimes \ketbra{j}{j} \\
		& = & I.
	\end{eqnarray*}
	
	We have thus shown that $\Omega_\sigma(\Psi(\sigma))$ is incoherent, and $\Omega_\sigma(\Psi(\ketbra{v}{v})) = \Omega_\sigma(\ketbra{v}{v}) = \ketbra{\lambda}{\lambda}$ for all pure states of the form $\ket{v} = \sum_{j=1}^n\lambda_j\ket{j}\otimes\ket{j}$, so the channel $\Phi := \Omega_\sigma \circ \Psi$ is the one described by the lemma.
\end{proof}

\begin{proof}[Proof of Theorem~\ref{thm:max_correlated_entanglement}]
	We start by proving that $E_{\tr}(\ketbra{v}{v}) \leq C_{\tr}(\ketbra{\lambda}{\lambda})$. To this end, let $\delta^* = \mathrm{diag}(\delta_1^*,\ldots,\delta_n^*) \in \mathcal{I}$ be an incoherent state that attains the minimum in $C_{\tr}(\ketbra{\lambda}{\lambda})$: $$C_{\tr}(\ketbra{\lambda}{\lambda}) = \min_{\delta\in \mathcal{I}}\|\ketbra{\lambda}{\lambda}-\delta\|_{\tr} =\|\ketbra{\lambda}{\lambda}-\delta^*\|_{\tr}.$$ Then consider the separable state	
	$$\sigma^* = \sum_{i=1}^n \delta_i^* \ket{i}\bra{i} \otimes \ket{i}\bra{i}.$$
	
	A calculation then reveals that
\begin{eqnarray*}
		&&E_{\tr}(\ketbra{v}{v}) \\ 
		&=& \min_{\sigma \in \mathcal{S}} \|\ketbra{v}{v}-\sigma\|_{\tr}\\
		&\leq& \|\ketbra{v}{v}-\sigma^*\|_{\tr} \\
		&=& \left\|\sum_{i,j=1}^n \lambda_i\lambda_j \ket{i}\bra{j} \otimes \ket{i}\bra{j}-\sum_{i=1}^n \delta_i^* \ket{i}\bra{i} \otimes \ket{i}\bra{i}\right\|_{\tr}.
		\end{eqnarray*}
	
	We recognize that the matrix on the far right above is exactly the same as the matrix $$\sum_{i,j=1}^n \lambda_i\lambda_j\ket{i}\bra{j}-\sum_{i=1}^n \delta_i^* \ket{i}\bra{i} = \tilde{\rho} - \delta^*,$$
	but with some extra rows and columns of zeroes. Since those rows and columns of zeroes do not affect the trace norm, it follows that $E_{\tr}(\ketbra{v}{v}) \leq \|\ketbra{\lambda}{\lambda} - \delta^*\|_{\tr} = C_{\tr}(\ketbra{\lambda}{\lambda})$, as desired.
	
	Next, we prove the inequality that $C_{\tr}(\ketbra{\lambda}{\lambda}) \leq E_{\tr}(\ketbra{v}{v})$ in a very similar manner. To this end, let $\sigma^* \in \mathcal{S}$ be a separable (and hence PPT) state that attains the minimum in $E_{\tr}(\ketbra{v}{v})$: $$E_{\tr}(\ketbra{v}{v}) = \min_{\sigma\in \mathcal{S}}\|\ketbra{v}{v}-\sigma\|_{\tr} =\|\ketbra{v}{v}-\sigma^*\|_{\tr}.$$ Note that we can assume without loss of generality that $\sigma^*$ has all real entries, since $$\big\|\ketbra{v}{v} - \frac{1}{2}(\sigma^* + (\sigma^*)^T)\big\|_{\tr} \leq \|\ketbra{v}{v} - \sigma^*\|_{\tr}$$ by the triangle inequality, and the separable state $\frac{1}{2}(\sigma^* + (\sigma^*)^T)$ has all real entries.
	
	Then let $\Phi : \mathcal{M}_n \otimes \mathcal{M}_n \rightarrow \mathcal{M}_n$ be the channel described by Lemma~\ref{lem:ppt_incoherent} and let $\delta^* := \Phi(\sigma^*)$ (which is an incoherent state). Observe that
	\begin{eqnarray*}
		C_{\tr}(\ketbra{\lambda}{\lambda}) & = & \min_{\delta \in \mathcal{I}} \|\ketbra{\lambda}{\lambda}-\delta\|_{\tr} \\
		& \leq & \|\ketbra{\lambda}{\lambda}-\delta^*\|_{\tr} \\
		& = & \|\Phi(\ketbra{v}{v} -\sigma^*)\|_{\tr} \\
		& \leq & \|\ketbra{v}{v} -\sigma^*\|_{\tr} \\
		& = & E_{\tr}(\ketbra{v}{v}),
	\end{eqnarray*}
	where the final inequality comes from the fact that $\|\Phi\|_{\diamond} \leq 1$ for all quantum channels $\Phi$, and thus $\Phi$ cannot increase the trace norm. This completes the proof.
\end{proof}

The proof of one of the inequalities in Theorem~\ref{thm:max_correlated_entanglement} was quite straightforward, while the other inequality required the use of Lemma~\ref{lem:ppt_incoherent}. The same technique can be used to prove that other measures of entanglement and coherence coincide on pure states as well. For example, for the \emph{robustness of entanglement} \cite{VT99} $R_E$, we could use this method to show that $R_E(\ketbra{v}{v}) = C_R(\ketbra{\lambda}{\lambda})$ (however explicit formulas are already known for each of $R_E(\ketbra{v}{v})$ and $C_R(\ketbra{\lambda}{\lambda})$, so this does not get us anything new).

However, we also note that Theorem~\ref{thm:max_correlated_entanglement} and Lemma~\ref{lem:ppt_incoherent} can both be generalized slightly from pure states to real-valued states that are \emph{maximally correlated} \cite{Rai01}: states with the special form $\rho = \sum_{i,j=1}^n \rho_{ij}\ketbra{i}{j} \otimes \ketbra{i}{j}$, which lets us show that coherence measures and entanglement measures also coincide on this slightly wider class of states as well (rather than just on pure states).

\section{VI. Relationship between the $\ell_1$-norm of coherence and the relative entropy of coherence}

Consider Proposition~5 of \cite{RPL15}, which asserts that the $\ell_1$-norm coherence of a 
pure state is never smaller than its relative entropy of coherence. This section is devoted to 
providing an alternate proof to this theorem. In particular, the authors in \cite{RPL15} use the 
recursive property of the entropy function to show that $C_{\ell_1}\geq C_r$ for all pure states. 
Our proof, on the other hand, relies on showing that a function remains non-negative upon small 
perturbations of the components of its input. Much detail is given, with the hope of better 
understanding this inequality.

Before proceeding, recall that the relative entropy of coherence is defined in terms of the von Neumann entropy $S(\rho) := -\tr(\rho\log_2(\rho))$. From now on, we will write $\log=\log_2$ for notational simplicity, since we deal with no other base. 

\begin{theorem}\label{thm:conj6}
	Suppose $\{\lambda_i\}_{i=1}^n$ are such that $\sum\limits_i \lambda_i=1$ and $\lambda_i\geq 0$ for every $i$. Then
	\begin{eqnarray*}
		-\sum\limits_i \lambda_i \log \lambda_i \leq \Big(\sum\limits_i \sqrt{\lambda_i}\Big)^2-1.
	\end{eqnarray*}
\end{theorem}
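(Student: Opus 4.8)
The plan is to restate the theorem as a single non-negativity claim on the probability simplex and then cut the number of variables down to one. Using $\sum_i\lambda_i=1$ one has $\big(\sum_i\sqrt{\lambda_i}\big)^2-1=\sum_{i\neq j}\sqrt{\lambda_i\lambda_j}$, so the asserted inequality is equivalent to
\[
F(\lambda):=\Big(\sum_{i=1}^n\sqrt{\lambda_i}\Big)^{2}-1+\sum_{i=1}^n\lambda_i\log\lambda_i\ \ge\ 0
\]
for all $\lambda=(\lambda_1,\dots,\lambda_n)$ in the simplex $\Delta_n=\{\lambda:\lambda_i\ge 0,\ \sum_i\lambda_i=1\}$ (equivalently, this is the inequality $C_{\ell_1}(\ketbra{x}{x})\ge C_{\mathrm r}(\ketbra{x}{x})$ for the pure state with amplitudes $\sqrt{\lambda_i}$, though the proof does not use this reading). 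With the convention $0\log 0=0$, $F$ is continuous on the compact set $\Delta_n$, so it attains its minimum at some $\lambda^{*}$, and it suffices to show $F(\lambda^{*})\ge 0$.

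I would argue by induction on $n$. If $\lambda^{*}$ lies on the boundary of $\Delta_n$ then some $\lambda^{*}_i=0$ and $F(\lambda^{*})$ is the value of the analogous function on $\Delta_{n-1}$, which is non-negative by the inductive hypothesis; the base case $n=1$ gives $F(1)=0$. So assume $\lambda^{*}$ is interior, i.e.\ all $\lambda^{*}_i>0$. Then $F$ is smooth near $\lambda^{*}$ and the Lagrange condition for the single constraint $\sum_i\lambda_i=1$ forces all partial derivatives to coincide:
\[
\frac{S}{\sqrt{\lambda^{*}_i}}+\log\lambda^{*}_i=c\qquad(i=1,\dots,n),\qquad S:=\sum_{i}\sqrt{\lambda^{*}_i}.
\]
Since $t\mapsto S/\sqrt t+\log t$ is strictly decreasing then strictly increasing on $(0,\infty)$ (its derivative is $t^{-3/2}\big(t^{1/2}/\ln 2-S/2\big)$), the displayed equation has at most two solutions, so $\lambda^{*}$ takes at most two distinct positive values. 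If it takes only one then $\lambda^{*}=(1/n,\dots,1/n)$ and $F(\lambda^{*})=n-1-\log n\ge 0$ because $2^{n-1}\ge n$; otherwise there are two, say $p$ entries equal to $\alpha$ and $q=n-p$ entries equal to $\beta$ with $p,q\ge 1$ and $p\alpha+q\beta=1$.

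This reduces everything to a one-variable inequality. With $u:=p\alpha\in(0,1)$ (so $q\beta=1-u$) one computes $F(\lambda^{*})=g_{p,q}(u)$, where
\bes
g_{p,q}(u)&:=&pu+q(1-u)+2\sqrt{pq\,u(1-u)}-1\\
&&+\,u\log(u/p)+(1-u)\log\big((1-u)/q\big),
\ees
and it is enough to prove $g_{p,q}(u)\ge 0$ for all integers $p,q\ge 1$ and all $u\in[0,1]$. The endpoints are immediate, $g_{p,q}(0)=q-1-\log q\ge 0$ and $g_{p,q}(1)=p-1-\log p\ge 0$ (again $2^{m-1}\ge m$), and the case $p=q=1$ is the standard estimate $2\sqrt{u(1-u)}\ge -u\log u-(1-u)\log(1-u)$. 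What remains is to rule out an interior dip below zero — for instance by showing that $g_{p,q}''<0$ at every interior critical point of $g_{p,q}$, so that the minimum of $g_{p,q}$ on $[0,1]$ is attained at one of the two endpoints.

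This last step is the real obstacle, and it is precisely the ``function stays non-negative under small perturbations'' part: $2\sqrt{pq\,u(1-u)}$ is concave in $u$ while the entropy-like term $u\log u+(1-u)\log(1-u)$ is convex, so $g_{p,q}$ is neither convex nor concave, and one has to show quantitatively that the concave Bhattacharyya-type term beats the convex entropy term near any interior critical point. The constant $1/\ln 2$ has to be tracked with care here: the inequality is tight at the uniform two-point state when the base is $2$, and it fails outright for logarithms of base less than $2$, so no slack is available. Everything preceding this computation is routine bookkeeping; the careful sign analysis of $g_{p,q}''$ at its critical points is where the ``much detail'' of the argument goes.
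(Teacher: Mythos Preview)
Your setup and first-order analysis are correct: continuity and compactness give a minimizer, induction disposes of the boundary, and in the interior the Lagrange condition $S/\sqrt{\lambda_i}+\log\lambda_i=c$ together with the unimodality of $t\mapsto S/\sqrt t+\log t$ forces at most two distinct values. The uniform case and the endpoints of $g_{p,q}$ are handled cleanly. But the proof then stops: you explicitly flag the interior analysis of $g_{p,q}$ as ``the real obstacle'' and do not carry it out. The proposed mechanism (show $g_{p,q}''<0$ at every interior critical point so the minimum sits at an endpoint) is a reasonable strategy, yet it is neither executed nor obviously true, and---as you yourself note---the inequality is tight at the uniform two-point law and fails for logarithms of base below $2$, so no slack is available to absorb a hand-wave. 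As written this is an outline whose decisive analytic step is missing; the reduction to one variable is valid, but the one-variable inequality for general $p,q$ is precisely the content of the theorem and has not been proved.

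The paper reaches the same first-order condition (via explicit $\epsilon$-perturbations rather than Lagrange multipliers) but then diverges from your two-value parametrization. It rewrites stationarity as $g_{\lambda_j}(\lambda_i)=\dfrac{\log\lambda_i-\log\lambda_j}{\lambda_i^{-1/2}-\lambda_j^{-1/2}}=-S$ and uses strict monotonicity of $x\mapsto g_t(x)$ to argue that for $n\ge 3$ a non-uniform minimizer is impossible, so that only $n\le 2$ requires direct analysis; for $n=2$ it substitutes the critical-point equation back into $f$, reduces to showing $t-\log(1+t^2)\ge 0$ at the roots of $2\log t+1/t-t=0$, and checks those roots. Thus the paper sidesteps your two-parameter family $g_{p,q}$ altogether and faces explicit computation only in what is essentially your $p=q=1$ case. (One should note, though, that the paper's monotonicity step as literally written only excludes three or more distinct values---when exactly two values occur among $n\ge 3$ coordinates, the chosen index $j$ may coincide in value with one of $\lambda_i,\lambda_{i'}$ and $g_{\lambda_j}(\lambda_i)$ is then undefined---so neither route is entirely free of loose ends; your unimodality argument is in fact the cleaner way to see ``at most two values,'' but it leaves you with more to do afterwards.)
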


\begin{proof}
	In order to prove the above inequality, it suffices to show that the function
	$f(\vec{\lambda}):=(\sum\limits_i \sqrt{\lambda_i})^2-1+\sum\limits_i \lambda_i \log \lambda_i$ is always non-negative for any probability vector $\vec{\lambda}$.
	
	Without loss of generality, we can assume all $\lambda_i$'s are strictly positive. Otherwise, we just look at some smaller $n$.  Let's consider the following perturbation:
	\begin{eqnarray*}
		&&f(\{\lambda_1,\cdots,\lambda_{i-1}, \lambda_i-\epsilon, 
		\lambda_{i+1},\cdots, \lambda_{j-1},\lambda_j+\epsilon,\cdots, \lambda_n\})
		\\ && -f(\{\lambda_1,\cdots,\lambda_n\})\\
		&&= \big(\sum\limits_{k\neq i, j} \sqrt{\lambda_k}+\sqrt{\lambda_i-\epsilon}+\sqrt{\lambda_j+\epsilon}\big)^2
		\\ && -\big(\sum\limits_{k\neq i,j} \sqrt{\lambda_k}+\sqrt{\lambda_i}+\sqrt{\lambda_j}\big)^2\\		&&+(\lambda_i-\epsilon)\log(\lambda_i-\epsilon)+(\lambda_j+\epsilon)\log(\lambda_j+\epsilon)
		\\ && -\lambda_i \log \lambda_i-\lambda_j \log \lambda_j\\
		&&=\big(2\sum\limits_{k\neq i, j} \sqrt{\lambda_k}+\sqrt{\lambda_i-\epsilon}+\sqrt{\lambda_j+\epsilon}
		+\sqrt{\lambda_i}+\sqrt{\lambda_j}\big) \\ &&\times (\sqrt{\lambda_i-\epsilon}+\sqrt{\lambda_j+\epsilon}-\sqrt{\lambda_i}-\sqrt{\lambda_j})\\
		&&+\lambda_i \log\big(1-\frac{\epsilon}{\lambda_i}\big)+\lambda_j\log\big(1+\frac{\epsilon}{\lambda_j}\big) 
		\\ && +\epsilon[\log(\lambda_j+\epsilon)-\log(\lambda_i-\epsilon)].
	\end{eqnarray*}
Recall that $\sqrt{1+x}=1+\frac{1}{2}x-\frac{1}{8}x^2+O(x^3)$ 
and $\log{(1+x)}=x-\frac{x^2}{2}+O(x^3)$, the above expression simplifies as 
	\begin{eqnarray*}
		&&\left[2\sum\limits_{k\neq i, j} \sqrt{\lambda_k}+\sqrt{\lambda_i}\big(2-\frac{\epsilon}{2\lambda_i}\big)+\sqrt{\lambda_j}\big(2+\frac{\epsilon}{2\lambda_j}\big)\right] \\
		&& \times\left(-\frac{\sqrt{\lambda_i} \epsilon}{2\lambda_i}+ \frac{\sqrt{\lambda_j}\epsilon}{2\lambda_j}\right)+\epsilon(\log \lambda_j-\log \lambda_i)+O(\epsilon^2) \\
		&=&  \bigg(\big(\sum\limits_{k=1}^n \sqrt{\lambda_k}\big)\left(\frac{1}{\sqrt{\lambda_j}}-\frac{1}{\sqrt{\lambda_i}}\right)+(\log \lambda_j-\log \lambda_i)\bigg)\epsilon\\
		&&+O(\epsilon^2).
	\end{eqnarray*}
	
	So, if  $\big(\sum\limits_{k=1}^n \sqrt{\lambda_k}\big)\left(\frac{1}{\sqrt{\lambda_j}}-\frac{1}{\sqrt{\lambda_i}}\right)+(\log \lambda_j-\log \lambda_i)<0$, then the above perturbation will lead to a smaller value of $f$. Thus, if we assume function $f$ achieves its minimum value at point $(\lambda_1,\cdots,\lambda_n)$, if $\lambda_i=0$ for some $i$, then we can look at the same problem with $n-1$ variables. So without loss of generality, we can still assume all $\lambda_i$'s are strictly positive, we must have 
	\begin{eqnarray*}
	\big(\sum\limits_{k=1}^n \sqrt{\lambda_k}\big)\left(\frac{1}{\sqrt{\lambda_j}}-\frac{1}{\sqrt{\lambda_i}}\right)+(\log \lambda_j-\log \lambda_i)\geq 0
	\end{eqnarray*}
	for any $1\leq i,j\leq n$.
	
	We can also consider the perturbation $(\lambda_1,\cdots,\lambda_n)\mapsto (\lambda_1,\cdots,\lambda_{i-1}, \lambda_i+\epsilon, 
		\lambda_{i+1},\cdots, \lambda_{j-1},\lambda_j-\epsilon,\lambda_{j+1},\cdots, \lambda_n)$ which will imply 
	\begin{eqnarray*}
	\big(\sum\limits_{k=1}^n \sqrt{\lambda_k}\big)\left(\frac{1}{\sqrt{\lambda_i}}-\frac{1}{\sqrt{\lambda_j}}\right)+(\log \lambda_i-\log \lambda_j)\geq 0
	\end{eqnarray*}
	for any $1\leq i,j\leq n$.
	
	By combining the above inequalities together, we will have
	\begin{eqnarray*}
	\big(\sum\limits_{k=1}^n \sqrt{\lambda_k}\big)\left(\frac{1}{\sqrt{\lambda_j}}-\frac{1}{\sqrt{\lambda_i}}\right)+(\log \lambda_j-\log \lambda_i)=0
	\end{eqnarray*}
	for any $1\leq i,j\leq n$.
	
	It also implies that, $\frac{\log \lambda_j-\log \lambda_i}{\frac{1}{\sqrt{\lambda_j}}-\frac{1}{\sqrt{\lambda_i}}}=-\sum\limits_{k=1}^n \sqrt{\lambda_k}$ for any $1\leq i\neq j\leq n$. 
	
	Note that, for any given $0\leq t\leq 1$,  function $g_t(x)=\frac{\log x-\log t}{\frac{1}{\sqrt{x}}-\frac{1}{\sqrt{t}}}$ is a decreasing function for $x\in (0,1]$. Hence, if $n\geq 3$ and there are at least two distinct $\lambda_i$ and $\lambda_{i'}$,  let's choose $j\neq i, i'$, we must have $g_{\lambda_j}(\lambda_i)\neq g_{\lambda_j}(\lambda_i')$. It's a contradiction. Thus, we must have $n\leq 2$ or $\lambda_1=\lambda_2=\cdots=\lambda_n$ in which case $f(\vec{\lambda})=n-1+\log \frac{1}{n}=n-1- \log n$, which is always non-negative for $n\in \mathbb{Z}^{+}$. For the case $n\leq 2$, we have $\lambda_1$ and $\lambda_2=1-\lambda_1$ satisfying the minimum condition:
	\begin{eqnarray*}
	&& (\sqrt{\lambda_1}+\sqrt{1-\lambda_1})(\frac{1}{\sqrt{\lambda_1}}-\frac{1}{\sqrt{1-\lambda_1}})\\
	&&+\log(\lambda_1)-\log(1-\lambda_1)=0\\
	&\iff & \sqrt{\frac{1-\lambda_1}{\lambda_1}}-\sqrt{\frac{\lambda_1}{1-\lambda_1}}+\log \frac{\lambda_1}{1-\lambda_1}=0.
 	\end{eqnarray*}
	
	We also have
	\begin{eqnarray*}
	&&f(\{\lambda_1,1-\lambda_1\})\\
	&=&(\sqrt{\lambda_1}+\sqrt{1-\lambda_1})^2-1+\lambda_1 \log \lambda_1+(1-\lambda_1)\log(1-\lambda_1)\\
	&=&2\sqrt{\lambda_1(1-\lambda_1)}+\lambda_1 \log \lambda_1+(1-\lambda_1)\log(1-\lambda_1)\\
	&=& 2\sqrt{\lambda_1(1-\lambda_1)}+\log(1-\lambda_1)+\lambda_1(\log\lambda_1-\log(1-\lambda_1))\\
	&=&2\sqrt{\lambda_1(1-\lambda_1)}+\log(1-\lambda_1)-\lambda_1(\sqrt{\frac{1-\lambda_1}{\lambda_1}}-\sqrt{\frac{\lambda_1}{1-\lambda_1}})\\
	&=&\sqrt{\frac{\lambda_1}{1-\lambda_1}}+\log(1-\lambda_1).
	\end{eqnarray*}
	
	Let $t=\sqrt{\frac{\lambda_1}{1-\lambda_1}}$, our aim is to show $t-\log(1+t^2)\geq 0$ under the assumption that $2\log t+\frac{1}{t}-t=0$. It is easy to verify that it has three roots only: $t_1=0.215106$, $t_2=1$ and $t_3=4.64886$, and for all of them $t-\log(1+t^2)\geq 0$. The result follows. 
\end{proof}

\begin{corollary}\label{cor:l1_relent}
	For every pure state $\ket{x}$, 
	\begin{eqnarray*}
		C_{\ell_1}(\ketbra{x}{x})\geq \max\{C_r(\ketbra{x}{x}), 2^{C_r(\ketbra{x}{x})}-1\}.
	\end{eqnarray*}
\end{corollary}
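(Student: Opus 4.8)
The plan is to reduce both halves of the maximum to statements about the probability vector $\vec\lambda = (|x_1|^2,\dots,|x_n|^2)$. First I would record the standard pure-state formulas: writing $\lambda_i = |x_i|^2$, one has $C_{\ell_1}(\ketbra{x}{x}) = \sum_{i\neq j}|x_i||x_j| = \big(\sum_i\sqrt{\lambda_i}\big)^2 - 1$, while $C_r(\ketbra{x}{x}) = S(\rho_{\operatorname{diag}}) - S(\rho) = -\sum_i\lambda_i\log\lambda_i$ since the von Neumann entropy of a pure state vanishes. Throughout I may discard the coordinates with $\lambda_i = 0$, as they contribute nothing to either quantity, so that the remaining $\lambda_i$ are strictly positive.

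With this dictionary in place, the bound $C_{\ell_1}(\ketbra{x}{x}) \geq C_r(\ketbra{x}{x})$ is literally the content of Theorem~\ref{thm:conj6} applied to $\vec\lambda$, so that half of the maximum requires no further work.

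For the remaining bound $C_{\ell_1}(\ketbra{x}{x}) \geq 2^{C_r(\ketbra{x}{x})} - 1$, it suffices — after adding $1$ and taking $\log$ — to prove that $2\log\big(\sum_i\sqrt{\lambda_i}\big) \geq -\sum_i\lambda_i\log\lambda_i$. I would obtain this from concavity of $\log$ (Jensen's inequality), applied with weights $\lambda_i$ to the values $1/\sqrt{\lambda_i}$:
$$\log\Big(\sum_i \lambda_i\cdot\tfrac{1}{\sqrt{\lambda_i}}\Big) \;\geq\; \sum_i \lambda_i\log\tfrac{1}{\sqrt{\lambda_i}} \;=\; -\tfrac12\sum_i\lambda_i\log\lambda_i,$$
and the left-hand side is exactly $\log\big(\sum_i\sqrt{\lambda_i}\big)$. (This is precisely the monotonicity of the Rényi entropy $H_\alpha(\vec\lambda)$ in $\alpha$, evaluated between $\alpha = 1/2$ and $\alpha = 1$.) Exponentiating gives $2^{C_r(\ketbra{x}{x})} \leq \big(\sum_i\sqrt{\lambda_i}\big)^2 = C_{\ell_1}(\ketbra{x}{x}) + 1$, which is the claim.

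Since both halves of the maximum are established, the corollary follows. There is no real obstacle here: the only points needing care are the reduction to strictly positive $\lambda_i$ before invoking Jensen's inequality, and the observation that the first inequality is exactly Theorem~\ref{thm:conj6} rather than something requiring a new argument; the genuinely new input — the $2^{C_r}-1$ bound — is then just a one-line application of concavity of the logarithm.
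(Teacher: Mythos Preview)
Your proposal is correct and follows essentially the same reduction as the paper: both record the pure-state formulas $C_{\ell_1}(\ketbra{x}{x}) = \big(\sum_i\sqrt{\lambda_i}\big)^2 - 1$ and $C_r(\ketbra{x}{x}) = -\sum_i\lambda_i\log\lambda_i$, and both invoke Theorem~\ref{thm:conj6} verbatim for the bound $C_{\ell_1}\geq C_r$.

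The only difference is in the treatment of the second bound $C_{\ell_1}\geq 2^{C_r}-1$. The paper simply cites \cite[Proposition~5]{RPL15} for this, whereas you supply a self-contained one-line argument via Jensen's inequality (equivalently, monotonicity of the R\'enyi entropy between orders $1/2$ and $1$). Your route is slightly more informative in that it makes the proof independent of the external reference and exposes why the exponential bound is the softer of the two; the paper's route is terser but relies on the reader consulting \cite{RPL15}. Neither approach involves any idea not already standard, so the distinction is minor.
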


\begin{proof} Write $\ket{x}=\sum_{i=1}^n\sqrt{\lambda_i}\ket{i}$ for a given basis $\{\ket{i}\}_{i=1}^n$. Then $C_{\ell_1}(\ketbra{x}{x})=\big(\sum_{i=1}^n\sqrt{\lambda_i}\big)^2-1$. Recall that the von Neumann entropy is zero for pure states, and so $C_r(\ketbra{x}{x})$ reduces to $S(\ketbra{x}{x}_{\operatorname{diag}})=-\sum_{i=1}^n\lambda_i\log\lambda_i$. 
	In \cite[Proposition 5]{RPL15}, the authors prove that $C_{\ell_1}(\ketbra{x}{x})\geq 2^{C_r(\ketbra{x}{x})}-1$. Theorem \ref{thm:conj6} above states that $C_{\ell_1}(\ketbra{x}{x})\geq  -\sum_{i=1}^n\lambda_i\log\lambda_i= C_r(\ketbra{x}{x})$.
\end{proof}

We note that Corollary~\ref{cor:l1_relent} improves the bound $C_{\ell_1}(\ketbra{x}{x})\geq \ln(2)C_r(\ketbra{x}{x})$ given in \cite[Proposition 5]{RPL15}. We also note, following the discussion in \cite[Section~III]{RPL15}, that Corollary~\ref{cor:l1_relent} improves a well-known inequality relating the negativity~\cite{VW02} and the distillable entanglement of pure states.

To elaborate a bit, we recall that a well-known upper bound on the relative entropy of entanglement $E_r(\ketbra{y}{y})$ of a pure state $\ket{y} \in \mathbb{C}^m \otimes \mathbb{C}^n$ in terms of its negativity $N(\ketbra{y}{y})$ is $E_r(\ketbra{y}{y}) \leq \log(1 + 2N(\ketbra{y}{y}))$. Since the relative entropy of entanglement is equal to the distillable entanglement when restricted to pure states, the same inequality holds for distillable entanglement as well. Using our results, we immediately obtain the following improvement to this bound:

\begin{corollary}\label{cor:entanglement}
	For every pure state $\ket{y} \in \mathbb{C}^m \otimes \mathbb{C}^n$, 
	\begin{eqnarray*}
		E_r(\ketbra{y}{y}) \leq 2N(\ketbra{y}{y}).
	\end{eqnarray*}
\end{corollary}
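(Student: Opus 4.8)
The plan is to reduce the inequality $E_r(\ketbra{y}{y}) \le 2N(\ketbra{y}{y})$ to Corollary~\ref{cor:l1_relent} (equivalently Theorem~\ref{thm:conj6}) by rewriting both sides purely in terms of the Schmidt coefficients of $\ket{y}$. First I would fix a Schmidt decomposition $\ket{y} = \sum_{i=1}^r \sqrt{\lambda_i}\,\ket{a_i}\otimes\ket{b_i}$, where $(\lambda_1,\dots,\lambda_r)$ is a probability vector (the squared Schmidt coefficients) and $r \le \min\{m,n\}$; since every quantity in the statement is invariant under local unitaries, this normalization is harmless. I would then recall the two standard closed forms for pure bipartite states. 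The negativity satisfies $N(\ketbra{y}{y}) = \frac{1}{2}\big[\big(\sum_{i=1}^r \sqrt{\lambda_i}\big)^2 - 1\big]$: the partial transpose of $\ketbra{y}{y}$ has the eigenvalues $\lambda_i$ together with $\pm\sqrt{\lambda_i\lambda_j}$ on each two-dimensional block spanned by $\ket{a_i}\ket{b_j},\ket{a_j}\ket{b_i}$ with $i\neq j$, so $\|(\ketbra{y}{y})^{T_B}\|_{\tr} = \big(\sum_i\sqrt{\lambda_i}\big)^2$. And the relative entropy of entanglement of a pure state equals its entropy of entanglement, $E_r(\ketbra{y}{y}) = S(\rho_A) = -\sum_{i=1}^r \lambda_i\log\lambda_i$, where $\rho_A$ is the reduced state; this Vedral--Plenio identity is already used in the surrounding discussion in its equivalent guise as distillable entanglement.

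Next I would introduce the single-party pure state $\ket{x} := \sum_{i=1}^r \sqrt{\lambda_i}\,\ket{i} \in \mathbb{C}^r$, whose amplitudes are exactly the Schmidt coefficients of $\ket{y}$. By the definitions of $C_{\ell_1}$ and $C_r$ (and using $S(\ketbra{x}{x}) = 0$), we have $C_{\ell_1}(\ketbra{x}{x}) = \big(\sum_{i=1}^r\sqrt{\lambda_i}\big)^2 - 1 = 2N(\ketbra{y}{y})$ and $C_r(\ketbra{x}{x}) = -\sum_{i=1}^r \lambda_i\log\lambda_i = E_r(\ketbra{y}{y})$. The inequality $C_{\ell_1}(\ketbra{x}{x}) \ge C_r(\ketbra{x}{x})$ furnished by Corollary~\ref{cor:l1_relent} (that is, by Theorem~\ref{thm:conj6}) then immediately gives $E_r(\ketbra{y}{y}) = C_r(\ketbra{x}{x}) \le C_{\ell_1}(\ketbra{x}{x}) = 2N(\ketbra{y}{y})$, which is the claim.

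I do not expect any genuine obstacle here, since all the analytic difficulty is already packed into Theorem~\ref{thm:conj6}. The only point needing care is bookkeeping the normalization conventions: one must check that the factor of $2$ and the $-1$ in the negativity formula line up with the ``subtract $1$'' normalization of $C_{\ell_1}$, and that the relative-entropy-of-entanglement convention (base-$2$ logarithm) matches that of $C_r$. Once the dictionary $\lambda_i \leftrightarrow (\text{squared Schmidt coefficient})$ is fixed, both identifications are one-line verifications and the corollary follows.
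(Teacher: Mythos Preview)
Your proposal is correct and follows essentially the same route as the paper: introduce the single-party state $\ket{x}$ built from the Schmidt coefficients of $\ket{y}$, identify $C_{\ell_1}(\ketbra{x}{x}) = 2N(\ketbra{y}{y})$ and $C_r(\ketbra{x}{x}) = E_r(\ketbra{y}{y})$, and then invoke Corollary~\ref{cor:l1_relent}. If anything, you give more detail than the paper does on the closed-form expressions for $N$ and $E_r$ on pure states, which the paper leaves as ``straightforward to verify.''
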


\begin{proof}
	It is straightforward to verify that if $\ket{x} = \sum_{i=1}^n x_i\ket{i}$ and $\ket{y}$ has Schmidt coefficients $\{x_i\}_{i=1}^n$, then $C_{\ell_1}(\ketbra{x}{x}) = 2N(\ketbra{y}{y})$ and $C_r(\ketbra{x}{x}) = E_r(\ketbra{y}{y})$. Thus using Corollary~\ref{cor:l1_relent} immediately implies that
	\bes
	E_r(\ketbra{y}{y}) &=& C_r(\ketbra{x}{x}) \leq C_{\ell_1}(\ketbra{x}{x}) = 2N(\ketbra{y}{y}),
	\ees
	as desired.
\end{proof}

It is straightforward to verify that the bound provided by Corollary~\ref{cor:entanglement} is strictly better than the known bound $E_r(\ketbra{y}{y}) \leq \log(1 + 2N(\ketbra{y}{y}))$ exactly when $N(\ketbra{y}{y}) < 1/2$.

\section{VII. Conclusions and Discussion}

In this work, we derived an explicit expression for the trace distance of coherence of a pure state, as well as the closest incoherent state to a given pure state with respect to the trace distance. One natural question that arises from this work is whether or not Theorem~\ref{thm:pure_algorithm} can be used to show that the trace distance of coherence is strongly monotonic under incoherent quantum channels (and is thus a proper coherence measure), at least when it is restricted to pure states. We also proved that the states maximizing the trace distance of coherence are exactly the maximally coherent states, which provides evidence in favor of it being a proper coherence measure.

We gave an alternate proof to the recent theorem that the $\ell_1$-norm of coherence is not smaller than the relative entropy of coherence for pure states (Corollary~\ref{cor:l1_relent}), and showed how  this result is used to derive a new relationship between negativity and distillable entanglement of pure states. We note that it has been conjectured that the same relationship between the $\ell_1$-norm of coherence and the relative entropy of coherence holds even for arbitrary mixed states. This   conjecture is beyond the scope of our work; our perturbation techniques for the case of pure states rely on the linearity of the first-order term, which is no longer linear for the mixed state case. Perturbation techniques may still apply if we study higher-order terms, however, more detailed calculation may be involved.

In a further attempt to analyze the trace measure of coherence, we show that it is precisely the same quantity as the analogous trace distance of entanglement when restricted to pure states. In particular, this gives an efficient method of computing the trace distance of entanglement for pure states, and it was not obvious a priori that such a method even existed. More generally, we showed that many natural pairs of coherence and entanglement measures share the exact same formulas when restricted to pure states: the entanglement in a pure state is equal to the coherence of its vector of Schmidt coefficients, and this property generalizes slightly to the class of real maximally correlated states.

\section{Acknowledgements}

The authors would like to thank Diana Pelejo and Yue Liu for some helpful comments in the early
discussion of the project. They also thank Marco Piani for useful discussions about an early draft of this paper, and for conjecturing some results that led to the entirety of Section~V. S.G. was supported by the Mount Allison President's Research and Creative Activities Fund. C.-K.L. is an affiliate member of the Institute for Quantum Computing, 
University of Waterloo. He is an honorary professor of the University of Hong Kong and the Shanghai 
University. His research was supported
by USA NSF grant DMS 1331021, Simons Foundation Grant 351047, and NNSF of China
Grant 11571220. S.P. was supported by NSERC Discovery Grant number 1174582.

\end{document}